\title{Approximation and Dependence\\ via Multiteam Semantics}
\author{Arnaud Durand\inst{1} \and Miika Hannula\inst{2} \and Juha Kontinen\inst{2} \and Arne Meier\inst{3} \and Jonni~Virtema\inst{3}}
\institute{Institut de Math\'ematiques de Jussieu - Paris Rive Gauche, CNRS UMR 7586 - Universit\'e Paris Diderot, \email{durand@math.univ-paris-diderot.fr}
 \and Department of Mathematics and Statistics, University of Helsinki, \email{$\{$juha.kontinen,miika.hannula$\}$@helsinki.fi} \and Leibniz Universität Hannover, Institut für Theoretische Informatik, \email{$\{$meier,virtema$\}$@thi.uni-hannover.de}}
\begin{document}
 \maketitle

\begin{abstract}
We define a variant of team semantics called \emph{multiteam semantics} based on multisets and study the properties of various logics in this framework. In particular, we define natural probabilistic versions of inclusion and independence atoms and certain approximation operators motivated by approximate dependence atoms of V\"a\"an\"anen. 
\end{abstract}

\section{Introduction}

Dependence logic was introduced by V\"a\"an\"anen in 2007 \cite{vaananen07}. It extends first-order logic with dependence atomic formulas (dependence atoms) $\dep(\tuple x, y)$ with the intuitive meaning that the value of the variable $y$ is functionally determined by the values of the variables $\tuple x$.
The notion of dependence has real meaning only in plurals. 
Thus, in contrast to the usual Tarskian semantics, in dependence logic the satisfaction of formulas is defined not via single assignments but via sets of assignments. Such sets are called \emph{teams} and the semantics is called \emph{team semantics}. In this article we take a further step of replacing structures and teams by their multiset analogues. Multiteams have been considered in some earlier works \cite{2015arXiv150901812H,hpv14,v14} but so far no systematic study of the subject in the team semantics context has appeared. In the temporal logic setting (in the context of computation tree logic) multiteam semantics have been introduced and studied recently \cite{kmv15}. In this article we define the so-called \emph{lax} and \emph{strict} multiteam semantics and study properties of various logics under these semantics. Moreover we show how the shift from sets to multisets naturally gives rise to probabilistic and approximate versions of dependence logic.

The idea of team semantics goes back to Hodges \cite{MR1465612} whose aim was to define compositional semantics for \emph{independence-friendly logic} \cite{hisa89}. The introduction of dependence logic and its many variants has evinced that team semantics is a very interesting and versatile semantical framework. In fact, team semantics has natural propositional, modal, and temporal variants. The study of \emph{modal dependence logic} was initiated by V\"a\"an\"anen \cite{VaananenMDL} in 2008. Shortly after, \emph{extended modal dependence logic} was introduced by Ebbing et~al. \cite{EHMMVV13} and \emph{modal independence logic} by Kontinen et~al. \cite{DBLP:journals/corr/KontinenMSV14a}. In purely propositional context the study was initiated by Yang and V\"a\"an\"anen \cite{DBLP:journals/corr/YangV14} and further studied, e.g., by Hannula et~al. \cite{DBLP:journals/corr/HannulaKVV15}. 
One of the most important developments in the area of team semantics was the introduction of \emph{independence logic} \cite{gradel13} in which dependence atoms of dependence logic are replaced by \emph{independence atoms}
$\tuple y \perp_{\tuple x} \tuple 
z$. 
The intuitive meaning of the independence atom $\tuple y \perp_{\tuple x} \tuple z$ is that, when the value of $ \tuple x$ is fixed, knowing the value of $\tuple z$ does not tell us anything new about the value of $\tuple y$. Soon after the introduction of independence logic, Galliani \cite{Galliani:2011} showed that independence atoms can be further analysed, and alternatively expressed, in terms of inclusion and exclusion atoms.
The inclusion atom $\tuple x \subseteq \tuple y$ expresses that each value taken by $\tuple x$ in a team $X$ appears also as a value of $\tuple y$ in $X$. The meaning of the exclusion atom $ \tuple x | \tuple y$ is that $\tuple x$ and $\tuple y$ have no common values in $X$.

Independence, inclusion, and exclusion atoms have very interesting properties in the team semantics setting. For example, inclusion atoms give rise to a variant of dependence logic that corresponds to the complexity class PTIME over finite ordered structures \cite{gh13}. In fact, the complexity theoretic aspects of these atoms in propositional, modal, and first-order setting have been studied extensively during the past few years (see the survey of Durand et~al.\ \cite{DKV} and the references therein).
 
A team $X$ over variables $x_1,\ldots, x_{n}$ can be viewed as a database table with $x_1,\ldots, x_{n}$ as its attributes. Under this interpretation, dependence, inclusion, exclusion, and independence atoms correspond exactly to functional, inclusion, exclusion, and embedded multivalued dependencies, respectively. These dependencies have been studied extensively in database theory. The close connection between team semantics and database theory has already led to fruitful interactions between these areas \cite{DBLP:conf/foiks/HannulaK14,DBLP:conf/cikm/HannulaKL14,DBLP:conf/wollic/KontinenLV13}. It is worth noting that multiset semantics (also known as bag semantics) is widely used in databases \cite{DBLP:conf/cikm/BottcherLZ14,DBLP:conf/amw/Kolaitis13,Lamperti:2000:MDS:647269.721839}. On the other hand, independence atoms, embedded multivalued dependencies, and the notion of conditional independence $\tuple Y \perp\tuple Z|\tuple X$ in statistics have very interesting connections, see, e.g., \cite{Gyssens2014628,wbw00}. In this article we establish that, in the multiteam semantics setting, independence atoms can be naturally interpreted exactly as statistical conditional independence. Probabilistic versions of dependence logic have been previously studied by Galliani and Mann \cite{Galliani2008,Mann}.

In practice dependencies such as functional dependence do not hold absolutely but with a small margin of error. In order to logically model such scenarios, V\"a\"an\"anen introduced approximate dependence atoms \cite{v14}. The corresponding approximate functional dependencies have been studied in the context of data mining \cite{DBLP:journals/tcs/KivinenM95}. In this article we define a general approximation operator which, in particular, can be used to express approximate dependence atoms. In the last sections of the article, we study the computational aspects of logics extended by the approximation operator.

  \paragraph{Previous work on multisets in team semantics} The idea of generalising team semantics by the use of multisets has been discussed in several articles. Hyttinen et~al.\ \cite{2015arXiv150901812H} study multiteams, and their generalisations called \emph{quantum teams}, which they use to give semantics to a propositional logic called \emph{quantum team logic} that can be used for the logical analysis of phenomena in quantum physics. Moreover Hyttinen et~al.\ \cite{hpv14} define a notion of a \emph{measure team} and \emph{measure team logic}. The latter is a logic for making inferences about probabilities of first-order formulas in measure teams. Furthermore Krebs et~al.\ introduced team semantics with multisets for the temporal logic CTL \cite{kmv15}. Finally the fact that under multiteam semantics approximate dependence atoms have the locality property (compare to Proposition \ref{locality_approx}) is discussed by V\"a\"an\"anen \cite{v14}. 

 \paragraph{Organisation.} This article is organised as follows. Section~\ref{Prel} briefly discusses the basic concepts and definitions. The generalisation of team semantics to multisets is presented in Section~\ref{Mult}. Section~\ref{Appr} defines the approximation operators, and in Section~\ref{Comp} the complexity theoretic aspects of logics with the approximation operators are studied.

\section{Preliminaries}\label{Prel}
We assume familiarity with standard notions in computational complexity theory and logic. We will make use of the complexity classes $\NP$ and $\Ptime$. For an introduction into this topic we refer to the good textbook of Papadimitriou \cite{DBLP:books/daglib/0072413}.

\subsection{Team Semantics}

Vocabularies $\tau$ are finite sets of relation symbols with prescribed arities. For each $R\in\tau$, let $\ar(R)\in Z_+$ denote the arity of $R$. A $\tau$-structure is a tuple $\A = \big(A, (R^\A_i)_{R_i\in\tau}\big)$, where $A$ is a set and each $R^\A_i$ is an $ar(R_i)$-ary relation on $A$ (i.e., $R^\A_i \subseteq A^{ar(R_i)}$). We use $\A$, $\mathfrak{B}$, etc. to denote $\tau$-structures and $A$, $B$, etc.\ to denote the corresponding domains. In this article we restrict attention to finite structures.

Let $D$ be a finite set of first-order variables and $A$ be a nonempty set. A function $s\colon D \to A$ is called an \emph{assignment}. The set $D$ is the \emph{domain} of $s$, and the set $A$ the \emph{codomain} of $s$. For a variable $x$ and $a\in A$, the assignment $s(a/x)\colon D\cup\{x\} \rightarrow A$ is obtained from $s$ as follows:
\[
s(a/x)(y) :=
\begin{cases}
a & \text{if $y=x$},\\
s(y) &\text{otherwise}. 
\end{cases} 
\]

A $\emph{team}$ is a finite set of assignments with a common domain and codomain.
Let $X$ be a team, $A$ a finite set, and $F\colon X\to \Po(A)\setminus \{\emptyset\}$ a function. We denote by $X[A/x]$ the modified team $\{s(a/x) \mid s\in X, a\in A\}$, and by $X[F/x]$ the team $\{s(a/x)\mid s\in X, a\in F(s)\}$. Let $\A$ be a $\tau$-structure and $X$ a team with codomain $A$, then we say that $X$ is a team of $\A$.

Let $\tau$ be a set of relation symbols. The syntax of first-order logic $\FO(\tau)$ is given by the following grammar, where $R\in \tau$, $\tuple x$ is a tuple of variables, and $x$ and $y$ are variables. Note that in the definition the scope of negation is restricted to atomic formulae.
\[
\varphi ::=  x=y \mid x\neq y \mid R(\tuple x) \mid \neg R(\tuple x) \mid (\varphi \land \varphi) \mid (\varphi \lor \varphi) \mid \exists x \varphi \mid \forall x \varphi.
\]
Let $\tuple x,\tuple y$ be tuples of variables and $\varphi$ a formula. We write $\Var(\varphi)$ for the set of variables that occur in $\varphi$, and $\Var(\tuple x)$ for the set of variables listed in $\tuple x$. We also write $\tuple x\tuple y$ for the concatenation of $\tuple x $ and $\tuple y$, 
 $\tuple x \cap \tuple y$ for any tuple listing the variables that occur both in $\tuple x$ and $\tuple y$, and $\tuple x \setminus \tuple y$ for any tuple listing the variables that occur in $\tuple x $ but not in $\tuple y$. For an assignment $s$, we write $s(\tuple x)$ to denote the sequence $\big(s(x_1),\ldots ,s(x_n)\big)$. 

Next we define the lax and strict team semantics of first-order logic. It is worth noting that the disjunction has a non-classical interpretation. The classical (or intuitionistic) disjunction is usually denoted by $\ovee$ in the  team semantics framework. However, as exemplified by Proposition \ref{extension}, the non-classical disjunction of team semantics naturally corresponds to the classical disjunction of ordinary first-order logic.
\begin{definition}[Lax team semantics]
Let $\A$ be a $\tau$-structure and $X$ a team of $\A$. The satisfaction relation $\models_X$ for first-order logic is defined as follows:

\begin{tabbing}
left \= $\A \models_{X} (\psi \land \theta)$ \= $\Leftrightarrow$ \= $\forall s\in X: s(\tuple x) \in R^{\A}$\kill
\> $\A \models_{X} x=y$ \> $\Leftrightarrow$ \> $\forall s\in X: s(x)=s(y)$\\ 
\> $\A \models_{X} x\neq y$ \> $\Leftrightarrow$ \> $\forall s\in X: s(x) \not= s(y)$\\
\> $\A \models_{X} R(\tuple x)$ \> $\Leftrightarrow$ \> $\forall s\in X: s(\tuple x) \in R^{\A}$\\ 
\> $\A \models_{X} \neg R(\tuple x)$ \> $\Leftrightarrow$ \> $\forall s\in X: s(\tuple x) \not\in R^{\A}$\\
\> $\A\models_{X} (\psi \land \theta)$ \> $\Leftrightarrow$ \> $\A \models_{X} \psi \text{ and } \A \models_{X} \theta$\\
\> $\A\models_{X} (\psi \lor \theta)$ \> $\Leftrightarrow$ \> $\A\models_Y \psi \text{ and } \A \models_Z \theta \text{ for some  $Y,Z\sub X$}$ s.t.\ $Y\cup Z = X$\\
\> $\A\models_{X} \forall x\psi$ \> $\Leftrightarrow$ \> $\A\models_{X[A/x]} \psi$\\
\> $\A\models_{X} \exists x\psi$ \> $\Leftrightarrow$ \> $\A\models_{X[F/x]} \psi \text{ holds for some } F\colon X \to \Po(A)\setminus \{\emptyset\}$.
\end{tabbing}
\end{definition}
The so-called \emph{strict} team semantics is obtained from the previous definition by adding the following two requirements.
\begin{enumerate}[(i)]
\item Disjunction: $Y\cap Z= \emptyset$.
\item Existential quantification: $F(s)$ is singleton for all $s\in X$.
\end{enumerate}

\begin{proposition}[\cite{vaananen07}]\label{extension}
Let $\A$ be a $\tau$-structure, $X$ a team of $\A$, and $\varphi$ a formula of $\FO(\tau)$. Then
\[
\A\models_X \varphi \Leftrightarrow \forall s\in X: \A\models_s \varphi,
\]
where $\models_s$ denotes the ordinary satisfaction relation of first-order logic defined via models and assignments as usual, and $\models_X$ denotes the satisfaction relation of either lax or strict team semantics.
\end{proposition}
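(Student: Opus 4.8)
The plan is to argue by structural induction on $\varphi$, establishing the equivalence for both the lax and the strict satisfaction relations at once. The two relations can be handled uniformly by exploiting a simple asymmetry between the directions. In the left-to-right direction I will work with lax semantics, since a disjoint split is in particular a lax split and a singleton choice function is in particular a lax choice function; hence every strict witness is also a lax one, and it suffices to derive $\forall s\in X:\A\models_s\varphi$ from a lax witness. In the right-to-left direction I will instead produce strict witnesses, namely disjoint splits and singleton functions, which are simultaneously legal under both semantics. Together with the trivial monotonicity that strict satisfaction implies lax satisfaction (itself a one-line induction whose only nontrivial cases are $\lor$ and $\exists$), this closes the cycle $\text{strict}\Rightarrow\text{lax}\Rightarrow\text{RHS}\Rightarrow\text{strict}$ and yields the equivalence for both semantics.

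The base cases are immediate: for each literal ($x=y$, $x\neq y$, $R(\tuple x)$, $\neg R(\tuple x)$) the team-semantic clause is by definition the pointwise condition quantified by $\forall s\in X$, which is exactly $\forall s\in X:\A\models_s\varphi$ under ordinary Tarskian semantics. The conjunction case is routine: $\A\models_X(\psi\land\theta)$ unfolds to $\A\models_X\psi$ and $\A\models_X\theta$, to which the induction hypothesis applies, and pointwise conjunction commutes with the universal quantifier over $s$. The universal quantifier case is equally direct once one observes that every assignment in $X[A/x]$ is of the form $s(a/x)$ for $s\in X$ and $a\in A$, so that $\forall s'\in X[A/x]:\A\models_{s'}\psi$ is literally $\forall s\in X\,\forall a\in A:\A\models_{s(a/x)}\psi$, i.e.\ $\forall s\in X:\A\models_s\forall x\,\psi$.

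The two genuinely interesting cases are disjunction and existential quantification, and the disjunction is where the non-classical reading of $\lor$ must be reconciled with the classical pointwise one. For the right-to-left direction I define $Y=\{s\in X:\A\models_s\psi\}$ and $Z=X\setminus Y$; since each $s\in X$ satisfies $\psi$ or $\theta$, every $s\in Z$ satisfies $\theta$, so $Y$ and $Z$ form a disjoint split with $\A\models_Y\psi$ and $\A\models_Z\theta$ by the induction hypothesis, witnessing strict (hence lax) satisfaction. For the left-to-right direction, any lax split $X=Y\cup Z$ with $\A\models_Y\psi$ and $\A\models_Z\theta$ gives, by the induction hypothesis, that every $s\in Y$ satisfies $\psi$ and every $s\in Z$ satisfies $\theta$, whence every $s\in X$ satisfies $\psi\lor\theta$. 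The existential case is analogous: from $\forall s\in X:\A\models_s\exists x\,\psi$ I choose for each $s$ a witness $a_s$ and set $F(s)=\{a_s\}$, a singleton function for which $X[F/x]$ consists exactly of assignments satisfying $\psi$; conversely, given any lax $F$, picking an arbitrary $a\in F(s)$ (nonempty by definition) shows each $s\in X$ satisfies $\exists x\,\psi$.

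The main obstacle is thus not any single deep step but the bookkeeping needed to keep the lax/strict distinction under control across $\lor$ and $\exists$; the device of extracting witnesses from lax satisfaction while constructing only strict witnesses is precisely what lets the induction proceed uniformly for both semantics. The empty-team and vacuous quantification subcases cause no difficulty, since the pointwise condition is then vacuously true and the corresponding team-semantic clause is witnessed by empty splits or trivial functions.
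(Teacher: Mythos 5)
Your proof is correct. The paper itself does not prove Proposition~\ref{extension} (it is imported from V\"a\"an\"anen's book \cite{vaananen07}), and your structural induction is exactly the standard argument for flatness of $\FO$-formulas in team semantics; the cycle $\text{strict} \Rightarrow \text{lax} \Rightarrow \text{pointwise} \Rightarrow \text{strict}$, extracting witnesses laxly while constructing only disjoint splits and singleton choice functions, is a clean and sound way to obtain both semantics in a single induction.
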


For a model $\A$ and a \emph{sentence} $\varphi$ (i.e., a formula with no free variables), the satisfaction relation $\models$ is defined as: 
$$\A \models \varphi\textrm{ if }\A \models_{\{\emptyset\}} \varphi,$$
where $\{\emptyset\}$ denotes the singleton team of empty assignment.

Team semantics enables extending first-order logic with various dependency notions. The following dependency atoms were introduced in team semantics setting in \cite{vaananen07,Galliani:2011,gradel13}.

\begin{definition}[Dependency atoms]
Let $\A$ be a model and $X$ a team of $\A$. If $\tuple x, \tuple y$ are variable sequences, then $\dep(\tuple x,\tuple y)$ is a dependence atom with the satisfaction relation:
$$\A \models_X \dep(\tuple x,\tuple y) \textrm{ if for all } s,s'\in X \textrm{ s.t.\ }s(\tuple x) = s'(\tuple x), \textrm{ it holds that } s(\tuple y)=s'(\tuple y).$$

If $\tuple x,\tuple y$ are variable sequences of the same length, then $\tuple x \subseteq \tuple y$ is an inclusion atom with the satisfaction relation:
$$\A \models_X \tuple x \sub \tuple y \textrm{ if for all }s\in X \textrm{ there exists } s'\in X\textrm{ such that } s(\tuple x)=s'(\tuple y).$$ 
If $\tuple x,\tuple y,\tuple z$ are variable sequences, then $\cixyz$ is a conditional independence atom with the satisfaction relation:
\begin{align*}&\A\models_{X} \cixyz \textrm{ if for all } s,s'\in X\textrm{ such that }s(\tuple x)=s'(\tuple x)\textrm{ there exists } s''\in X \\&\textrm{ such that }s''(\tuple x)=s(\tuple x)\textrm{, }s''(\tuple y)=s(\tuple y), \textrm{ and }s''(\tuple z)=s'( \tuple z). 
\end{align*}
\end{definition}

Note that in the previous definition it is allowed that some or all of the vectors of variables have length $0$. For example, $\A\models_X\dep(\tuple x)$ holds iff $\forall s\in X: s(\tuple x)=\tuple c$ holds for some fixed tuple $\tuple c$, and $\A\models_X\cixyz$ holds always if either  of the vectors $\tuple y$ or $\tuple z$ is of length $0$.

We write $\FO$ for first-order logic, and given a set of atoms $\mathcal{C}$,
we write $\FO(\mathcal{C})$ (omitting the set parentheses of $\mathcal{C}$) for the logic obtained by adding the atoms of $\mathcal{C}$ to $\FO$. For instance, $\FO(\dep(\cdot))$ denotes then dependence logic.

Often in literature dependence atoms are defined such that $\tuple y$ is a single variable, i.e., the widely used form is $\dep(\tuple x,y)$. The definition above yields the strongest form of \emph{functional dependence}. Moreover the atom $\dep(\tuple x,\tuple y)$ can be equivalently rewritten as a conjunction of dependence atoms of type $\dep(\tuple x,y)$.

\section{Multiteam Semantics}\label{Mult}

In this section we generalise team semantics with the concept of multisets.  Multisets and multiteam semantics can be used, e.g., in applications to database theory to model reasoning with databases with duplicates. In practice, for multitude of reasons, the existence of duplicates in databases is very common.  Again as previously noted, we restrict attention to finite sets and finite multisets. In the following definition, occurrences of ``zero multiplicities'' are allowed for notational convenience.
\begin{definition}[Multiset]
A \emph{multiset} is a pair $(A,m)$ where $A$ is a set and $m:A \to \N$ is a (multiplicity) function. 
The function $m$ determines the \emph{multiplicities} of the elements in the multiset $(A,m)$. A multiset $(X,m)$ is a \emph{multiteam} if the underlining set $X$ is a team. The domain (or the codomain) of the multiteam $(X,m)$ is the domain (codomain) of the team $X$. 
\end{definition}
For each multiset $(A,m)$, we define the \emph{canonical set representative} $\cset{(A,m)}$ of $(A,m)$ as follows: 
\[
\cset{(A,m)} := \{ (a,i) \mid a\in A, 0<i\leq m(a)\}.
\]

We say that $(A,m)$ is finite whenever $\cset{(A,m)}$ is finite. We say that a multiset $(A,m)$ is a submultiset of a multiset $(B,n)$, $(A,m) \subseteq (B,n)$, if and only if $\cset{(A,m)} \subseteq \cset{(B,n)}$. Furthermore, we define that $(A,m) = (B,n)$ if and only if both $(A,m) \subseteq (B,n)$ and $(B,n) \subseteq (A,m)$ hold.

The \emph{disjoint union} $(A,m) \uplus (B,n)$ of $(A,m)$ and $(B,n)$ is the multiset $(C,k)$, where $C:= A\cup B$ and $k:C\to\N$ is the function defined as follows:
\[
k(s):=
\begin{cases}
m(s)+n(s) & \text{if $s\in A$ and $s\in B$},\\
m(s) & \text{if $s\in A$ and $s\not\in B$},\\
n(s) & \text{if $s\not\in A$ and $s\in B$}.
\end{cases}
\]

We write $\lvert(A,m)\rvert$ to denote the size of the multiset $(A,m)$, i.e., $\lvert(A,m)\rvert := \sum_{a\in A} m(a)$.
The set of non-empty submultisets of a multiset $(A,m)$ is the set 
$$\Po^+\big((A,m)\big):=\{ (C,l) \mid (C,l) \subseteq (A,m) \text{ s.t. } l(c)\geq 1 \text{ for each }c\in C\}\setminus\{(\emptyset,\emptyset)\}.$$
Let $(X,m)$ be a multiteam, $(A,n)$ a finite multiset, and $F\colon \cset{(X,m)}\to \Po^+\big( (A,n) \big)$ a function. We denote by $(X,m)[(A,n)/x]$ the modified multiteam defined as 
\[
\biguplus_{s\in X}\biguplus_{a\in A} \{\big(s(a/x), m(s) \cdot n(a)\big)\}.
\]
By $X[F/x]$ we denote the multiteam defined as
\[
\biguplus_{s\in X}\biguplus_{1\leq i \leq m(s)} \{\big(s(b/x), l(b)\big) \mid (B,l) = F\big((s,i)\big), b\in B\}.
\]

A $\tau$-multistructure is a tuple $\A = \big((A,m), (R^\A_i)_{R_i\in\tau}\big)$ where $(A,m)$ is a multiset and, for each $R_i\in\tau$, $R^\A_i$ is an $\ar(R_i)$-ary relation over the set $\{a\in A\mid m(a) \geq 1 \}$. A multiteam $(X,m)$ over $\A$ is a multiteam with codomain $A$. 

Next we define multiteam semantics for first-order logic.

\begin{definition}[Multiteam semantics]\label{laxmulti}
Let $\A$ be a $\tau$-multistructure, $(A,n)$ the domain of $\A$, and $(X,m)$ a multiteam over $\A$. The satisfaction relation $\models_{(X,m)}$ is defined as follows:
\begin{tabbing}
left \= $\A \models_{(X,m)} (\psi \land \theta)$ \= $\Leftrightarrow$ \= $\forall s\in X: \text{ if } m(s)\geq 1 \text{ then }  s(\tuple x) \not\in R^{\A}$\kill
\> $\A \models_{(X,m)} x=y$ \> $\Leftrightarrow$ \> $\forall s\in X: \text{ if } m(s)\geq 1 \text{ then }  s(x)=s(y)$\\ 
\> $\A \models_{(X,m)} x\neq y$ \> $\Leftrightarrow$ \> $\forall s\in X: \text{ if } m(s)\geq 1 \text{ then }  s(x)\not=s(y)$\\
\> $\A \models_{(X,m)} R(\tuple x)$ \> $\Leftrightarrow$ \> $\forall s\in X: \text{ if } m(s)\geq 1 \text{ then }  s(\tuple x) \in R^{\A}$\\ 
\> $\A \models_{(X,m)} \neg R(\tuple x)$ \> $\Leftrightarrow$ \> $\forall s\in X: \text{ if } m(s)\geq 1 \text{ then }  s(\tuple x) \not\in R^{\A}$\\
\> $\A\models_{(X,m)} (\psi \land \theta)$ \> $\Leftrightarrow$ \> $\A \models_{(X,m)} \psi \text{ and } \A \models_{(X,m)} \theta$\\
\> $\A\models_{(X,m)} (\psi \lor \theta)$ \> $\Leftrightarrow $\> $\A\models_{(Y,k)} \psi  \text{ and } \A \models_{(Z,l)} \theta \text{ for some multisets}$\\
\>\>\> $\text{$(Y,k),(Z,l)\subseteq (X,m)$ s.t. $(X,m) \subseteq (Y,k)\uplus(Z,l)$.}$\\
\> $\A\models_{(X,m)} \forall x\psi$ \> $\Leftrightarrow$ \> $\A\models_{(X,m)[(A,n)/x]} \psi$\\
\> $\A\models_{(X,m)} \exists x\psi$ \> $\Leftrightarrow$ \> $\A\models_{(X,m)[F/x]} \psi \text{ holds for some function}$\\
\>\>\> $F\colon \cset{(X,m)} \to \Po^+\big((A,n)\big)$.
\end{tabbing}
\end{definition}
The so-called \emph{strict multiteam semantics} is obtained from the previous definition by adding the following  two requirements.  
\begin{enumerate}[(i)]
\item Disjunction: $(Y,n)\uplus (Z,k) = (X,m)$.
\item Existential quantification: for all $s\in X$ and $0<i\leq m(s)$, $F\big( (s,i) \big)=(B,n)$ for some singleton $B=\{b\}$ and $n(b)=1$.
\end{enumerate}

 This alternative semantics is discussed in Section \ref{strictsect}. Otherwise in the paper we restrict attention to the multiteam semantics given in Definition \ref{laxmulti}, sometimes referred to as \emph{lax multiteam semantics}. The following proposition shows that  multiteam semantics and  team semantics for first-order logic coincide when the multisets in multistructures are essentially sets. The proof of the proposition is self evident.

\begin{proposition}\label{prop:sets}
Let $\A$ be a multistructure with domain $(A,n)$, and $(X,m)$ a multiteam over $\A$ such that $n(a)=m(s)=1$ for all $a\in A$ and $s\in X$. Define $\mathfrak B:=(A, (R^\A)_{R\in\tau})$. Then for every $\varphi\in\FO$ it holds that
$$
\A\models_{(X,m)} \varphi \text{ if and only if } \mathfrak B\models_{X}\varphi.
$$
\end{proposition}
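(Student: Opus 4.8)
The plan is to prove the equivalence by induction on the structure of $\varphi$. The naive induction---carrying along the hypothesis that all multiplicities equal $1$---breaks down at the quantifier clauses, because the modified multiteams $(X,m)[(A,n)/x]$ and $(X,m)[F/x]$ may assign multiplicity greater than $1$ to an assignment $t$ whenever several pairs $(s,a)$ (respectively several copies $(s,i)$) collapse to the same $t=s(a/x)$. I would therefore first strengthen the statement to one that is stable under the induction. Writing $\operatorname{supp}(X,m):=\{s\in X\mid m(s)\geq 1\}$ for the underlying support team, I would prove that for every multistructure $\A$ whose domain $(A,n)$ satisfies $n(a)\geq 1$ for all $a\in A$, with $\mathfrak B:=(A,(R^\A)_{R\in\tau})$, for every multiteam $(X,m)$ over $\A$ and every $\varphi\in\FO$,
$$\A\models_{(X,m)}\varphi \iff \mathfrak B\models_{\operatorname{supp}(X,m)}\varphi.$$
The proposition is then the special case $m\equiv 1$, $n\equiv 1$, for which $\operatorname{supp}(X,m)=X$; and since the domain of $\A$ never changes along the induction, the assumption $n(a)\geq 1$ is maintained for free.

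The literal and propositional cases are routine. For the literals, the multiteam clauses quantify over $s\in X$ guarded by ``$m(s)\geq 1$'', which is literally quantification over $\operatorname{supp}(X,m)$, and this matches the team clauses verbatim once one recalls that $R^\A=R^{\mathfrak B}$ by construction. The conjunction case is immediate from the induction hypothesis, since $\operatorname{supp}(X,m)$ is shared by both conjuncts. For the disjunction I would check that multiset splittings correspond exactly to set coverings of the support: given a split $(Y,k),(Z,l)\subseteq(X,m)$ with $(X,m)\subseteq(Y,k)\uplus(Z,l)$, the supports $Y':=\operatorname{supp}(Y,k)$ and $Z':=\operatorname{supp}(Z,l)$ are subsets of $\operatorname{supp}(X,m)$ whose union is all of $\operatorname{supp}(X,m)$, because $m(s)\leq k(s)+l(s)$ forces every $s$ of positive multiplicity into $Y'$ or $Z'$; conversely a covering $Y'\cup Z'=\operatorname{supp}(X,m)$ lifts to the split $k(s):=m(s)$ for $s\in Y'$ and $0$ otherwise, and symmetrically for $l$. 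Applying the induction hypothesis on each side then transfers satisfaction.

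The heart of the argument is the quantifier step, where I would compute the support of the modified multiteam directly. For $\forall x$, the support of $(X,m)[(A,n)/x]=\biguplus_{s\in X}\biguplus_{a\in A}\{(s(a/x),m(s)\cdot n(a))\}$ is exactly $\{s(a/x)\mid s\in\operatorname{supp}(X,m),\,a\in A\}=\operatorname{supp}(X,m)[A/x]$, where the assumption $n(a)\geq 1$ ensures that no value $a$ drops out; the induction hypothesis together with the team clause for $\forall$ closes the case. The existential quantifier is the main obstacle, both because $(X,m)[F/x]$ has the more involved definition and because it is precisely here that multiplicities genuinely exceed $1$. Given a witness $F\colon\cset{(X,m)}\to\Po^+\big((A,n)\big)$, writing $F(s,i)=(B_{s,i},l_{s,i})$, I would read off the set-valued witness $G(s):=\bigcup_{1\leq i\leq m(s)}B_{s,i}$, a nonempty subset of $A$, and verify that $\operatorname{supp}\big((X,m)[F/x]\big)=\operatorname{supp}(X,m)[G/x]$; conversely any team witness $G$ lifts to the choice $F(s,i):=(G(s),l)$ with $l(b)=1$ for $b\in G(s)$, which is a legitimate element of $\Po^+\big((A,n)\big)$ exactly because $n(a)\geq 1$. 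The induction hypothesis applied to $\psi$ then yields the equivalence. A cleaner but equivalent packaging would be to prove first that multiteam semantics for $\FO$ is \emph{flat}, i.e.\ $\A\models_{(X,m)}\varphi$ iff $\mathfrak B\models_s\varphi$ in the ordinary sense for every $s\in\operatorname{supp}(X,m)$, and then invoke Proposition~\ref{extension}; this phrasing makes the strict-semantics variant fall out uniformly, since flatness collapses the extra freedom that strict existential quantification would otherwise exercise over assignments of multiplicity greater than $1$.
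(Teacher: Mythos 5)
Your proof is correct, and it is considerably more than the paper itself offers: the authors simply declare that ``the proof of the proposition is self evident'' and give no argument whatsoever. Measured against that, your write-up supplies precisely the details that make the claim non-trivial. In particular, you correctly identify why the naive induction on the literal statement fails---the quantifier clauses produce multiteams whose multiplicities exceed $1$ whenever distinct pairs $(s,a)$, or distinct copies $(s,i)$, collapse to the same modified assignment---and you repair it with the right strengthened hypothesis: satisfaction of $\varphi$ in $(X,m)$ depends only on the support team $\{s\in X\mid m(s)\geq 1\}$, provided every domain element has positive multiplicity. That side condition $n(a)\geq 1$ is genuinely needed, since an element with $n(a)=0$ would still participate in $\mathfrak B$'s universal quantifier while vanishing from $(X,m)[(A,n)/x]$, so your generalisation is the correct inductively stable statement rather than an arbitrary one. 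Your case analysis is sound throughout: the correspondence between multiset splittings and set coverings of the support for $\lor$, the computation of the support of $(X,m)[(A,n)/x]$ for $\forall$, and the two translations between multiset-valued witnesses $F$ and set-valued witnesses $G$ for $\exists$ (where $n(a)\geq 1$ again licenses the lift $l(b)=1$) all check out. Your closing remark also pinpoints what the authors presumably regarded as self-evident: first-order formulas are flat in both semantics, so both sides reduce, via Proposition~\ref{extension} and its multiteam analogue, to satisfaction by each individual assignment of positive multiplicity---though proving that multiteam analogue rigorously still requires an induction essentially identical to the one you carried out.
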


Next we generalise inclusion and conditional independence atoms to multiteams by introducing their probabilistic versions. For a multiteam $(X,m)$ of codomain $A$,
a tuple of variables $\tuple x$ from $\Dom(X)$, and $\tuple a \in A^{|\tuple x|}$, we denote by $(X,m)_{\tuple x= \tuple a}$ the multiteam $(X,n)$ where $n$ agrees with $m$ on all assignments $s\in X$ with $s(\tuple x)=\tuple a$, and otherwise $n$ maps $s$ to $0$.


\begin{definition}\label{probsem}
Let $\A$ be a multistructure with domain $(A,n)$, and $(X,m)$ a multiteam over $\A$. If $\tuple x,\tuple y$ are variable sequences of the same length, then $\tuple x \leq \tuple y$ is a \emph{probabilistic inclusion atom} with the following semantics:
$$\A \models_{(X,m)} \tuple x \leq \tuple y \textrm{ if } |(X,m)_{\tuple x=\tuple s(\tuple x)}| \leq |(X,m)_{\tuple y=s(\tuple x)}| \textrm{ for all } s:\Var(\tuple x) \to A.$$
If $\tuple x,\tuple y,\tuple z$ are variable sequences, then $\pcixyz$ is a \emph{probabilistic conditional independence atom} with the satisfaction relation defined as
\begin{align}
	\A\models_{(X,m)} \pcixyz \label{def1}
\end{align}
if for all $s\colon\Var(\tuple x\tuple y\tuple z) \to A$ it holds that
\begin{align}
|(X,m)_{\tuple x\tuple y= s(\tuple x\tuple y)}| \cdot  |(X,m)_{\tuple x\tuple z=s(\tuple x\tuple z)}|=|(X,m)_{\tuple x\tuple y\tuple z=s(\tuple x\tuple y\tuple z)}|\cdot |(X,m)_{\tuple x=s(\tuple x)}|.\nonumber
\end{align}
\end{definition}

We call atoms of the form $\pci{\emptyset}{\tuple x}{\tuple y}$ \emph{probabilistic marginal independence atoms}, written as the shorthand $\pmi{\tuple x}{\tuple y}$. Note that we obtain the following satisfaction relation for $\pmi{\tuple x}{\tuple y}$:
\begin{align}
\A\models_{(X,m)} \pmixy \textrm{ if for all } &s:\Var(\tuple x\tuple y) \to A, \label{def2}\\
\frac{|(X,m)_{\tuple x =s(\tuple x)}| \cdot  |(X,m)_{\tuple y=s(\tuple y)}|}{|(X,m)|}&=|(X,m)_{\tuple x\tuple y=s(\tuple x\tuple y)}|.\nonumber
\end{align}

The study of database dependencies is very interesting also in the practical point of view as many interesting properties of datasets can be revealed.
Further the investigation of conditional independence can yield methods to be used to decompose datasets for speeding up different processing tasks on the data.

Multiteams $(X,m)$ induce a natural probability distribution $p$ over the assignments of $X$. Namely, we define $p\colon X \to [0,1]$ such that 
$$p(s) = \frac{m(s)}{\sum_{s\in X} m(s)}.$$ 

The probability that a tuple of (random) variables $\tuple x$ takes value $\tuple a$, written $\Pr(\tuple x = \tuple a)$, is then 
$$\sum_{\substack{s\in X,\\s(\tuple x) = \tuple a}} p(s).$$ 
It is now easy to see that $\A\models_{(X,m)} \pcixyz$ iff for all $\tuple a\tuple b\tuple c$,
$$\Pr(\tuple y=\tuple b, \tuple z = \tuple c | \tuple x= \tuple a) = \Pr(\tuple y=\tuple b | \tuple x= \tuple a)\Pr (\tuple z = \tuple c |  \tuple x= \tuple a),$$
that is, the probability of $\tuple y=\tuple b$ is independent of the probability of $\tuple z = \tuple c$, given  $\tuple x=\tuple a$. Analogously, a probabilistic inclusion atom $\tuple x \le \tuple y$ indicates that $\Pr(\tuple x = \tuple a) \leq \Pr(\tuple y =\tuple a)$ for all values $\tuple a$, and a probabilistic independence atom of the form $\pmi{\tuple x}{\tuple x}$ that $\Pr(\tuple x = \tuple a)=1$ for some value $\tuple a$. Note that such atoms have been studied in the literature under the name of \emph{constancy atoms} \cite{Galliani:2011}.

One can also study the usual dependency notions of database theory in the multiteam semantics setting.
\begin{definition}
Let $\A$ be a multistructure, $(X,m)$ a multiteam over $\A$, and $\varphi$ of the form $\dep(\tuple x, \tuple y)$, $\tuple x \sub \tuple y$, or $\cixyz$. Then the satisfaction relation $\models_{(X,m)}$ is defined as follows:
$$\A \models_{(X,m)} \varphi \,\textrm{ iff }\, \A \models_{X^+} \varphi,$$
where $X^+$ is the team $\{s\in X \mid m(s) \geq 1\}$. 
\end{definition}

First we notice that the known translation of dependence atoms to independence atoms (see Gr\"adel et~al.\ \cite{gradel13}) works also in the probabilistic case.
\begin{proposition}\label{dep}
Let $\A$ be a multistructure, $(X,m)$ a multiteam over $\A$, and $\tuple x,\tuple y$ tuples of variables. Then $\A \models _{(X,m)}\pci{\tuple x}{\tuple y}{\tuple y}$ \,iff\, $\A \models_{(X,m)} \dep(\tuple x,\tuple y)$.
\end{proposition}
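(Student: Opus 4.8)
The plan is to unwind both satisfaction conditions and observe that, once the repeated tuple $\tuple y$ is accounted for, the defining equation of the probabilistic independence atom collapses to a trivial arithmetic statement. Writing $a_s := |(X,m)_{\tuple x\tuple y = s(\tuple x\tuple y)}|$ and $b_s := |(X,m)_{\tuple x = s(\tuple x)}|$, the first step is to specialise Definition~\ref{probsem} to the case $\tuple z = \tuple y$. Here $\Var(\tuple x\tuple y\tuple y) = \Var(\tuple x\tuple y)$, and both $|(X,m)_{\tuple x\tuple z = s(\tuple x\tuple z)}|$ and $|(X,m)_{\tuple x\tuple y\tuple z = s(\tuple x\tuple y\tuple z)}|$ reduce to $a_s$, since imposing $\tuple y = s(\tuple y)$ a second time adds no constraint. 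Thus $\A\models_{(X,m)} \pci{\tuple x}{\tuple y}{\tuple y}$ becomes the requirement that $a_s^2 = a_s\, b_s$ for every $s\colon\Var(\tuple x\tuple y)\to A$.

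Next I would record the elementary fact that $a_s \le b_s$ always holds, because the assignments (counted with multiplicity) agreeing with $s$ on $\tuple x\tuple y$ form a submultiset of those agreeing with $s$ merely on $\tuple x$; more precisely $(X,m)_{\tuple x\tuple y = s(\tuple x\tuple y)}$ arises from $(X,m)_{\tuple x = s(\tuple x)}$ by zeroing out some multiplicities. Consequently $a_s^2 = a_s b_s$ is equivalent to $a_s(a_s - b_s) = 0$, i.e.\ to ``$a_s = 0$ or $a_s = b_s$''. The key sharpening is that, by the submultiset relation, $a_s = b_s$ holds precisely when no assignment of positive multiplicity agreeing with $s$ on $\tuple x$ disagrees with $s$ on $\tuple y$.

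For the direction from the independence atom to $\dep(\tuple x,\tuple y)$, I would take $t, t'\in X^+$ with $t(\tuple x) = t'(\tuple x)$ and apply the reduced condition to an $s$ with $s(\tuple x\tuple y) = t(\tuple x\tuple y)$. Since $a_s \ge m(t) \ge 1$, we are in the case $a_s = b_s$; but $t'$ has positive multiplicity and agrees with $s$ on $\tuple x$, so if $t'(\tuple y) \ne t(\tuple y)$ it would contribute to $b_s$ but not to $a_s$, forcing the strict inequality $a_s < b_s$, a contradiction. Hence $t(\tuple y) = t'(\tuple y)$, which is exactly $\A\models_{X^+}\dep(\tuple x,\tuple y)$. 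Conversely, assuming the dependence atom, fix $s$; if $a_s = 0$ the equation is immediate, and if $a_s > 0$ then some $t\in X^+$ witnesses $t(\tuple x\tuple y) = s(\tuple x\tuple y)$, whence functional dependence forces every positive-multiplicity assignment agreeing with $s$ on $\tuple x$ to agree with $t$, and hence with $s$, on $\tuple y$; this gives $a_s = b_s$ and so $a_s^2 = a_s b_s$.

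The computation is short and there is no genuinely hard step; the only point demanding a little care is the passage from the equality of sizes $a_s = b_s$ to the statement about individual assignments. One must invoke the submultiset relation $(X,m)_{\tuple x\tuple y = s(\tuple x\tuple y)} \subseteq (X,m)_{\tuple x = s(\tuple x)}$, rather than sizes alone, to conclude that a single differing $\tuple y$-value of positive multiplicity produces a strict drop in size. This is precisely what turns the arithmetic identity $a_s^2 = a_s b_s$ into the combinatorial characterisation of functional dependence.
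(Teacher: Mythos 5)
Your proof is correct and takes essentially the same route as the paper's: both reduce the truth condition of $\pci{\tuple x}{\tuple y}{\tuple y}$ to the statement that $|(X,m)_{\tuple x\tuple y = s(\tuple x\tuple y)}| = |(X,m)_{\tuple x = s(\tuple x)}|$ for every $s$ with $(X,m)_{\tuple x\tuple y = s(\tuple x\tuple y)}$ nonempty, and then identify this condition with satisfaction of $\dep(\tuple x,\tuple y)$ on the positive-multiplicity part of the multiteam. The only difference is one of detail: you spell out the cancellation $a_s^2 = a_s b_s$ via $a_s \le b_s$ and the passage between the size equality and functional dependence, steps the paper treats as immediate from the truth definition.
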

\begin{proof}
From the truth definition we obtain that
\begin{align}
\A\models_{(X,m)} \pci{\tuple x}{\tuple y}{\tuple y}  \,\Leftrightarrow\,\,  &\textrm{for all } s:\Var(\tuple x\tuple y) \to A\textrm{ with }(X,m)_{\tuple x\tuple y =s(\tuple x\tuple y)} \neq \emptyset , \label{pcidep}\\   
&|(X,m)_{\tuple x\tuple y =s(\tuple x\tuple y)}| = |(X,m)_{\tuple x =s(\tuple x)}|. \nonumber
\end{align}
The result then follows since $ \A \models_{(X,m)} \dep(\tuple x,\tuple y)$ iff the right-hand side of \eqref{pcidep} holds. \qed 
\end{proof}
Note that the restriction of Proposition \ref{dep} to marginal independence states that
$$
\A \models _{(X,m)} \pmi{\tuple x}{\tuple x} \quad\Leftrightarrow\quad  \A \models_{(X,m)} \dep(\tuple x).
$$

It is left open whether one can define inclusion or conditional independence atoms in $\FO(\bbot_{\rm c},\leq)$. However, over constant multiplicity functions conditional independence atoms $\varphi$ coincide with their probabilistic counterparts whenever $\Var(\varphi) =\Dom(X)$.
In the following, we denote by $\leftsup{\tuple x}{A}$ the team of all assignments $\Var(\tuple x) \to A$.
\begin{lemma}\label{rules}
Let $\A$ be a multistructure and $(X,m)$ a multiteam over $\A$. Then
\begin{enumerate}[(i)]
\item $\A \models_{(X,m)} \pci{\tuple x}{\tuple y}{\tuple z} \quad\Leftrightarrow\quad \A \models_{(X,m)} \pcib{\tuple x}{\tuple y\setminus \tuple x}{\tuple z\setminus \tuple x}$, 
\item $\A \models_{(X,m)} \pci{\tuple x}{\tuple y}{\tuple z} \quad\Leftrightarrow\quad \A \models_{(X,m)} \pcib{\tuple x}{\tuple y\setminus \tuple z}{\tuple z\setminus \tuple y} \wedge \pcib{\tuple x}{\tuple y\cap\tuple z}{\tuple y\cap\tuple z}$.
\end{enumerate}
\end{lemma}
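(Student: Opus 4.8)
The plan is to unfold Definition \ref{probsem} in both cases and argue entirely about the multiteam sizes $\lvert(X,m)_{\cdots}\rvert$. For part (i) I would first observe that $\lvert(X,m)_{\tuple v=s(\tuple v)}\rvert$ depends only on the \emph{set} of variables occurring in $\tuple v$ together with the values $s$ assigns to them. Since $\Var(\tuple x\tuple y)=\Var(\tuple x(\tuple y\setminus\tuple x))$, $\Var(\tuple x\tuple z)=\Var(\tuple x(\tuple z\setminus\tuple x))$ and $\Var(\tuple x\tuple y\tuple z)=\Var(\tuple x(\tuple y\setminus\tuple x)(\tuple z\setminus\tuple x))$, each of the four sizes in the defining equation of $\pci{\tuple x}{\tuple y}{\tuple z}$ is literally equal to the corresponding size for $\pcib{\tuple x}{\tuple y\setminus\tuple x}{\tuple z\setminus\tuple x}$, and both atoms quantify over the same assignments $\Var(\tuple x\tuple y\tuple z)\to A$. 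Hence (i) holds with no computation: it merely records that variables of $\tuple y,\tuple z$ already listed in $\tuple x$ are redundant once $\tuple x$ is fixed.

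For part (ii) I would use (i) to assume, without loss of generality, that $\tuple y$ and $\tuple z$ contain no variable of $\tuple x$. This is legitimate because, by (i), the left-hand side equals $\pci{\tuple x}{\tuple y\setminus\tuple x}{\tuple z\setminus\tuple x}$, while both right-hand conjuncts likewise reduce via (i), using $(\tuple y\setminus\tuple z)\setminus\tuple x=(\tuple y\setminus\tuple x)\setminus(\tuple z\setminus\tuple x)$ and $(\tuple y\cap\tuple z)\setminus\tuple x=(\tuple y\setminus\tuple x)\cap(\tuple z\setminus\tuple x)$. Thus $\tuple x$, $\tuple y':=\tuple y\setminus\tuple z$, $\tuple z':=\tuple z\setminus\tuple y$ and $\tuple u:=\tuple y\cap\tuple z$ may be taken pairwise disjoint. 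I would then recall, via Proposition \ref{dep} applied to $\tuple u$, that $\pcib{\tuple x}{\tuple u}{\tuple u}$ is equivalent to $\dep(\tuple x,\tuple u)$, i.e.\ to the statement that $\tuple u$ takes a unique value on the support of each fixed value of $\tuple x$. So (ii) becomes: $\pci{\tuple x}{\tuple y}{\tuple z}$ holds iff $\pci{\tuple x}{\tuple y'}{\tuple z'}$ holds and $\dep(\tuple x,\tuple u)$ holds.

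For the direction from right to left, fix an assignment $s$ and set $\tuple a=s(\tuple x)$. If $\lvert(X,m)_{\tuple x=\tuple a}\rvert=0$, or $s(\tuple u)$ is not the value forced by $\dep(\tuple x,\tuple u)$, then every size carrying a $\tuple u$-constraint vanishes and the defining equation holds trivially; otherwise $\tuple u$ is constant on the support of $\tuple x=\tuple a$, so dropping the now-redundant $\tuple u$-constraint turns the defining equation of $\pci{\tuple x}{\tuple y}{\tuple z}$ at $s$ into that of $\pci{\tuple x}{\tuple y'}{\tuple z'}$, which holds by assumption. For the converse, assume $\pci{\tuple x}{\tuple y}{\tuple z}$; the key step is to extract $\dep(\tuple x,\tuple u)$. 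Fixing $\tuple a$ with $M:=\lvert(X,m)_{\tuple x=\tuple a}\rvert>0$ and summing the defining product identity over all values of $\tuple z'$ (using that summing a joint size over a whole block yields the marginal size) gives, for all values $\tuple b$ of $\tuple y'$ and $\tuple c$ of $\tuple u$,
$$\lvert(X,m)_{\tuple x\tuple y'\tuple u=\tuple a\tuple b\tuple c}\rvert\cdot\big(\lvert(X,m)_{\tuple x\tuple u=\tuple a\tuple c}\rvert-M\big)=0.$$
Since $\sum_{\tuple c}\lvert(X,m)_{\tuple x\tuple u=\tuple a\tuple c}\rvert=M$ and every $\tuple c$ with $\lvert(X,m)_{\tuple x\tuple u=\tuple a\tuple c}\rvert>0$ admits some $\tuple b$ making the first factor positive, each such $\tuple c$ forces $\lvert(X,m)_{\tuple x\tuple u=\tuple a\tuple c}\rvert=M$; as the sizes sum to $M$, exactly one $\tuple u$-value survives, which is precisely $\dep(\tuple x,\tuple u)$. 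Specialising the original identity to that unique value then collapses each $\tuple y$-, $\tuple z$- and $\tuple y\tuple z$-size to the corresponding $\tuple y'$-, $\tuple z'$- and $\tuple y'\tuple z'$-size, giving $\pci{\tuple x}{\tuple y'}{\tuple z'}$.

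I expect the main obstacle to be exactly this extraction of the functional dependence in the converse: the product identity must be marginalised correctly and then combined with the counting constraint $\sum_{\tuple c}\lvert(X,m)_{\tuple x\tuple u=\tuple a\tuple c}\rvert=M$ to force a single surviving $\tuple u$-value. Everything else is routine bookkeeping about which size constraints become redundant once $\tuple x$, respectively $\tuple u$, is fixed.
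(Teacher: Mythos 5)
Your proof is correct, and it runs on the same computational engine as the paper's --- summing the defining identity of \eqref{def1} over dropped variables so that joint sizes collapse to marginal sizes --- but it is organised differently in part (ii). Part (i) is essentially identical in both treatments: the paper peels off one repeated variable at a time using the symmetry of \eqref{def1}, while you note at once that all four sizes depend only on the underlying variable sets; this is the same observation. For part (ii), the paper first proves a weakening rule, namely that $\A\models_{(X,m)}\pci{\tuple x}{\tuple y u}{\tuple z}$ implies $\A\models_{(X,m)}\pci{\tuple x}{\tuple y}{\tuple z}$ for $u$ not listed in $\tuple x\tuple y\tuple z$, by marginalising over the single variable $u$; iterating this (together with symmetry and part (i)) yields the conjunct $\pcib{\tuple x}{\tuple y\cap\tuple z}{\tuple y\cap\tuple z}$ from the left-hand side, and the remaining equivalence of $\pci{\tuple x}{\tuple y}{\tuple z}$ with $\pcib{\tuple x}{\tuple y\setminus\tuple z}{\tuple z\setminus\tuple y}$, \emph{conditional} on that conjunct, is dispatched in one stroke via the collapse \eqref{pcidep} of sizes along the constant intersection block. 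You instead normalise to pairwise disjoint tuples via (i) (your set-theoretic identities justifying this are correct), prove right-to-left by exactly that collapse plus a case analysis on vanishing sizes, and --- this is the genuine difference --- prove left-to-right by marginalising over the whole block $\tuple z\setminus\tuple y$ at once and running a counting argument on the resulting constraint $\lvert(X,m)_{\tuple x\tuple y'\tuple u=\tuple a\tuple b\tuple c}\rvert\cdot\bigl(\lvert(X,m)_{\tuple x\tuple u=\tuple a\tuple c}\rvert-M\bigr)=0$, where $M=\lvert(X,m)_{\tuple x=\tuple a}\rvert$, to force a unique surviving value of $\tuple u=\tuple y\cap\tuple z$, i.e.\ $\dep(\tuple x,\tuple u)$, which Proposition \ref{dep} converts into the second conjunct. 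Both arguments are sound. The paper's route buys a reusable projection/weakening rule for probabilistic independence atoms and a single conditional equivalence covering both directions; your route is more self-contained and makes explicit how the product form of \eqref{def1} forces constancy on the intersection block --- the fact the paper obtains indirectly by repeated weakening --- at the price of an ad hoc counting step.
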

\begin{proof}
\textbf{Case (i)}. The truth definition in \eqref{def1} is symmetric, and hence it suffices to show that $\A \models_{(X,m)} \pci{\tuple x}{\tuple y x}{\tuple z} \Leftrightarrow  \A \models_{(X,m)} \pci{\tuple x}{\tuple y}{\tuple z}$ whenever $x$ is listed in $\tuple x$. This follows since $\leftsup{\tuple x\tuple y x \tuple z}{A}= \leftsup{\tuple x\tuple y \tuple z}{A}$, and the equation in \eqref{def1} remains the same after removing $x$.

\textbf{Case (ii)}. Let us first show that $\A \models_{(X,m)} \pci{\tuple x}{\tuple y}{\tuple z}$ implies $\A \models_{(X,m)} \pcib{\tuple x}{\tuple y \cap\tuple z}{\tuple y \cap \tuple z}$. For this, it remains to show that $\A \models_{(X,m)} \pci{\tuple x}{\tuple y u}{\tuple z}$ implies $\A \models_{(X,m)} \pci{\tuple x}{\tuple y}{\tuple z}$, for $u$ not listed in $\tuple x \tuple y\tuple z$. This follows since for all $s\in \leftsup{\tuple x\tuple y \tuple z}{A}$, 
\begin{align*}
   |(X,m)_{\tuple x\tuple z=s(\tuple x\tuple z)}| &\cdot |(X,m)_{\tuple x\tuple y= s(\tuple x\tuple y)}|\\
&=  |(X,m)_{\tuple x\tuple z=s(\tuple x\tuple z)}| \cdot \Sigma_{a\in A} |(X,m)_{\tuple x\tuple yu= s(\tuple x\tuple y)a}|\\
&=\Sigma_{a\in A} (|(X,m)_{\tuple x\tuple z=s(\tuple x\tuple z)}|\cdot |(X,m)_{\tuple x\tuple yu= s(\tuple x\tuple y)a}|)\\
& = \Sigma_{a\in A} ( |(X,m)_{\tuple x = s(\tuple x)}|\cdot|(X,m)_{\tuple x\tuple y\tuple z u=s(\tuple x\tuple y\tuple z)a}|  )\\
&= |(X,m)_{\tuple x = s(\tuple x)}|\cdot  \Sigma_{a\in A}|(X,m)_{\tuple x\tuple y\tuple z u=s(\tuple x\tuple y\tuple z)a}| \\
&= |(X,m)_{\tuple x = s(\tuple x)}| \cdot |(X,m)_{\tuple x\tuple y\tuple z=s(\tuple x\tuple y\tuple z)}|,
\end{align*}
where in the third equation we apply the assumption that $\A \models_{(X,m)} \pci{\tuple x}{\tuple y a}{\tuple z}$. 

For the claim it now suffices to show that $\A \models_{(X,m)} \pci{\tuple x}{\tuple y}{\tuple z} \Leftrightarrow \A \models_{(X,m)} \pcib{\tuple x}{\tuple y\setminus \tuple z}{\tuple z\setminus \tuple y}$ whenever $\A \models_{(X,m)} \pcib{\tuple x}{\tuple y \cap\tuple z}{\tuple y \cap \tuple z}$. This follows directly from the truth definition since by \eqref{pcidep} for all $s\in \leftsup{\tuple x\tuple v}{A}$ with $(X,m)_{\tuple x\tuple v =s(\tuple x\tuple v)} \neq \emptyset$:
$$|(X,m)_{\tuple x\tuple v =s(\tuple x\tuple v)}| = |(X,m)_{\tuple x =s(\tuple x)}|,$$
for $\tuple v:= \tuple x \cap \tuple y$. \qed
\end{proof}

If $\tuple x,\tuple y,\tuple z$ are pairwise disjoint, then $\pcixyz$ corresponds to the generalised embedded multivalued dependency $\tuple x\multimap\to \tuple y\mid\tuple z$ that is defined over extended relational data models (i.e., relational data models equipped with a multiplicity function) using semantics that coincide with that of Definition \ref{probsem} \cite{wong97,wbw00}. It was shown by Wong \cite{wong97} that the generalised multivalued dependency $\tuple x\multimap\to \tuple y$ holds in an extended relational data model if and only if the underlying relational model satisfies the multivalued dependency $\tuple x\twoheadrightarrow \tuple y$. This is stated in the following theorem reformulated into the framework of this article.

\begin{theorem}[\cite{wong97}]\label{wongthm} Let $\A$ be a multistructure, $X$ a team over $\A$, and $\pcixyz$ a probabilistic conditional independence atom such that $\Var(\pcixyz) =\Dom(X)$ and $\tuple x ,\tuple y,\tuple z$ are pairwise disjoint. Let $1$ denote the constant function that maps all assignments of $X$ to $1$.
Then $\A \models_{(X,1)} \pcixyz$ \,iff\, $\A \models_{(X,1)} \cixyz$.
\end{theorem}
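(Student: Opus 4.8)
The plan is to translate both satisfaction conditions into statements about the integer counts $|(X,1)_{\tuple v = \tuple a}|$ and then compare them directly. The structural fact I would establish first is that, because $\Var(\pcixyz) = \Dom(X)$ and $\tuple x, \tuple y, \tuple z$ are pairwise disjoint, each assignment $s \in X$ is completely determined by the value $s(\tuple x\tuple y\tuple z)$. Hence for every choice of values $\tuple a, \tuple b, \tuple c$ for $\tuple x, \tuple y, \tuple z$ the count $|(X,1)_{\tuple x\tuple y\tuple z = \tuple a\tuple b\tuple c}|$ lies in $\{0,1\}$. I would abbreviate this count by $N_{\tuple a\tuple b\tuple c}$, and write $N_{\tuple a\tuple b\cdot} = |(X,1)_{\tuple x\tuple y = \tuple a\tuple b}| = \sum_{\tuple c} N_{\tuple a\tuple b\tuple c}$, and analogously $N_{\tuple a\cdot\tuple c} = |(X,1)_{\tuple x\tuple z = \tuple a\tuple c}|$ and $N_{\tuple a\cdot\cdot} = |(X,1)_{\tuple x = \tuple a}|$.

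In this notation $\A \models_{(X,1)} \pcixyz$ is exactly the family of equations $N_{\tuple a\tuple b\cdot}\cdot N_{\tuple a\cdot\tuple c} = N_{\tuple a\tuple b\tuple c}\cdot N_{\tuple a\cdot\cdot}$ ranging over all $\tuple a\tuple b\tuple c$, while unwinding the definition of $\cixyz$ under the uniqueness observation shows that $\A \models_{(X,1)} \cixyz$ is equivalent to the implication: whenever $N_{\tuple a\tuple b\cdot} \geq 1$ and $N_{\tuple a\cdot\tuple c} \geq 1$, then $N_{\tuple a\tuple b\tuple c} = 1$. The forward direction is then immediate: if both margins are positive the left-hand side of the equation is positive, forcing $N_{\tuple a\tuple b\tuple c} \geq 1$, hence $N_{\tuple a\tuple b\tuple c}=1$.

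For the converse I would fix a value $\tuple a$ and study the $\tuple a$-slice, the set of pairs $(\tuple b,\tuple c)$ with $N_{\tuple a\tuple b\tuple c} = 1$. Writing $B_{\tuple a} = \{\tuple b : N_{\tuple a\tuple b\cdot} \geq 1\}$ and $C_{\tuple a} = \{\tuple c : N_{\tuple a\cdot\tuple c} \geq 1\}$, the classical implication says precisely that the slice contains $B_{\tuple a}\times C_{\tuple a}$, while by definition it is contained in it; so the slice equals the rectangle $B_{\tuple a}\times C_{\tuple a}$. From this product structure one reads off $N_{\tuple a\cdot\cdot} = |B_{\tuple a}|\cdot|C_{\tuple a}|$, and that $N_{\tuple a\tuple b\cdot}$ equals $|C_{\tuple a}|$ when $\tuple b \in B_{\tuple a}$ and $0$ otherwise, symmetrically for $N_{\tuple a\cdot\tuple c}$. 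Substituting these into the target equation and splitting into the case $(\tuple b,\tuple c)\in B_{\tuple a}\times C_{\tuple a}$ versus the case $\tuple b\notin B_{\tuple a}$ or $\tuple c\notin C_{\tuple a}$ verifies the identity in each case.

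The main obstacle is this converse direction, and more precisely the recognition that classical conditional independence forces every $\tuple a$-slice to be a combinatorial rectangle; once that product structure is available the counting identity reduces to a short case check. I expect no real difficulty from the $\{0,1\}$-valuedness of $N_{\tuple a\tuple b\tuple c}$, which follows cleanly from the hypothesis $\Var(\pcixyz)=\Dom(X)$ together with disjointness, and I would stress that this hypothesis is exactly what is needed, since without it the probabilistic atom is strictly stronger than its classical counterpart.
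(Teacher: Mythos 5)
Your proof is correct, but note that the paper itself contains no proof of this statement: Theorem~\ref{wongthm} is imported from Wong's work on generalised multivalued dependencies over extended relational data models \cite{wong97}, merely reformulated into multiteam terminology, and the paper's own contribution is only the subsequent removal of the disjointness hypothesis (Proposition~\ref{equivalence_of_probabilistic_and_possibilistic_independence}) via Lemma~\ref{rules}. Your argument is thus a genuinely different route: a self-contained, elementary counting proof. Its ingredients are sound: since $\Var(\pcixyz)=\Dom(X)$, distinct assignments of $X$ differ on $\tuple x\tuple y\tuple z$, so under the constant multiplicity function every count $N_{\tuple a\tuple b\tuple c}=|(X,1)_{\tuple x\tuple y\tuple z=\tuple a\tuple b\tuple c}|$ lies in $\{0,1\}$; the forward direction is pure positivity of the product; and the converse correctly identifies that $\cixyz$ forces each slice $\{(\tuple b,\tuple c):N_{\tuple a\tuple b\tuple c}=1\}$ to equal the rectangle $B_{\tuple a}\times C_{\tuple a}$, after which the identity $N_{\tuple a\tuple b\cdot}\cdot N_{\tuple a\cdot\tuple c}=N_{\tuple a\tuple b\tuple c}\cdot N_{\tuple a\cdot\cdot}$ follows by the two-case check you describe. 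One small refinement: the $\{0,1\}$-valuedness already follows from $\Var(\pcixyz)=\Dom(X)$ alone; pairwise disjointness is used elsewhere, namely to ensure that every combination $(\tuple a,\tuple b,\tuple c)$ arises from a legitimate assignment $s\colon\Var(\tuple x\tuple y\tuple z)\to A$, so that $\tuple b$ and $\tuple c$ range independently and the rectangle argument is well formed (with overlapping tuples, $\cixyz$ instead forces $\dep(\tuple x,\tuple y\cap\tuple z)$, which is precisely what Lemma~\ref{rules} exploits). As for what each approach buys: yours makes the saturated case self-contained and exposes it as an elementary fact about $0/1$ arrays, whereas the paper's citation situates the theorem in its database-theoretic lineage and lets the authors concentrate on reducing the general (non-disjoint) case to the known saturated one.
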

Using Lemma \ref{rules}, the restriction that $\tuple x,\tuple y,\tuple z$ are disjoint can be now removed.
\begin{proposition}\label{equivalence_of_probabilistic_and_possibilistic_independence}
Let $\A$ be a multistructure, $X$ a team over $\A$, and $\pcixyz$ a probabilistic conditional independence atom such that $\Var(\pcixyz) =\Dom(X)$. Then $\A \models_{(X,1)} \pcixyz$ \,iff\, $\A \models_{(X,1)} \cixyz$.
\end{proposition}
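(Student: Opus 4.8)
The plan is to reduce the general case to Theorem~\ref{wongthm} by peeling off the overlap between $\tuple y$ and $\tuple z$, while being careful to keep the variable set of the resulting independence atom equal to $\Dom(X)$. Set $\tuple u := (\tuple y\cap\tuple z)\setminus\tuple x$, $\tuple v := \tuple y\setminus(\tuple x\tuple z)$, and $\tuple w := \tuple z\setminus(\tuple x\tuple y)$. Then $\tuple x\tuple u$, $\tuple v$, and $\tuple w$ are pairwise disjoint and, since $\Var(\pcixyz)=\Dom(X)$, we have $\tuple x\cup\tuple u\cup\tuple v\cup\tuple w=\Dom(X)$. The aim is to show that both $\pcixyz$ and $\cixyz$ are equivalent, over $(X,1)$, to the conjunction of the independence atom on $\tuple x\tuple u,\tuple v,\tuple w$ with the dependence atom $\dep(\tuple x,\tuple u)$; Theorem~\ref{wongthm} then applies to the independence conjunct, whose arguments are pairwise disjoint and whose variables exhaust $\Dom(X)$, and the dependence conjuncts will match verbatim.

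On the probabilistic side I would first apply Lemma~\ref{rules}(i) to pass from $\pcixyz$ to $\pci{\tuple x}{\tuple y\setminus\tuple x}{\tuple z\setminus\tuple x}$, and then Lemma~\ref{rules}(ii) together with Proposition~\ref{dep} to obtain that $\A\models_{(X,1)}\pcixyz$ iff $\A\models_{(X,1)}\pci{\tuple x}{\tuple v}{\tuple w}\wedge\dep(\tuple x,\tuple u)$ (here $\tuple y'\setminus\tuple z'=\tuple v$, $\tuple z'\setminus\tuple y'=\tuple w$, and $\tuple y'\cap\tuple z'=\tuple u$ for $\tuple y'=\tuple y\setminus\tuple x$, $\tuple z'=\tuple z\setminus\tuple x$). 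The remaining, and crucial, step is an \emph{absorption} of the functionally determined block $\tuple u$ into the conditioning tuple: I claim that, in the presence of $\dep(\tuple x,\tuple u)$, the atoms $\pci{\tuple x}{\tuple v}{\tuple w}$ and $\pci{\tuple x\tuple u}{\tuple v}{\tuple w}$ are equivalent over $(X,1)$. This holds because $\dep(\tuple x,\tuple u)$ forces every value of $\tuple x$ realised in $X$ to determine a unique value of $\tuple u$, so that the submultiteams selected by fixing $\tuple x$ and by fixing $\tuple x\tuple u$ coincide for every realised assignment, while for non-realised assignments the selecting sets are empty and the defining equation degenerates to $0=0$. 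Hence $\A\models_{(X,1)}\pcixyz$ iff $\A\models_{(X,1)}\pci{\tuple x\tuple u}{\tuple v}{\tuple w}\wedge\dep(\tuple x,\tuple u)$.

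On the relational side I would argue directly from the truth definition of $\cixyz$ that $\A\models_{X}\cixyz$ iff $\A\models_{X}(\tuple v\perp_{\tuple x\tuple u}\tuple w)\wedge\dep(\tuple x,\tuple u)$, where $\tuple v\perp_{\tuple x\tuple u}\tuple w$ is the non-probabilistic conditional independence atom and $X^+=X$ since the multiplicity is constantly $1$. For the forward direction, given $s,s'$ with $s(\tuple x)=s'(\tuple x)$ the witness $s''$ supplied by $\cixyz$ satisfies $s''(\tuple u)=s(\tuple u)$ and $s''(\tuple u)=s'(\tuple u)$ (because $\tuple u$ lies in both $\tuple y$ and $\tuple z$), yielding $\dep(\tuple x,\tuple u)$; the same $s''$ then witnesses $\tuple v\perp_{\tuple x\tuple u}\tuple w$. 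Conversely, given $\dep(\tuple x,\tuple u)$ the hypothesis $s(\tuple x)=s'(\tuple x)$ already forces $s(\tuple x\tuple u)=s'(\tuple x\tuple u)$, so a witness for $\tuple v\perp_{\tuple x\tuple u}\tuple w$ agrees with $s$ on all variables of $\tuple y$ (those in $\tuple x$, in $\tuple u$, and in $\tuple v$) and with $s'$ on all variables of $\tuple z$, i.e.\ it is a witness for $\cixyz$.

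Combining the two reductions, Theorem~\ref{wongthm} gives $\A\models_{(X,1)}\pci{\tuple x\tuple u}{\tuple v}{\tuple w}$ iff $\A\models_{X}\tuple v\perp_{\tuple x\tuple u}\tuple w$, and the dependence conjuncts $\dep(\tuple x,\tuple u)$ are literally the same condition in both chains; chaining the equivalences yields $\A\models_{(X,1)}\pcixyz$ iff $\A\models_{(X,1)}\cixyz$. I expect the main obstacle to be the absorption step: it is precisely what keeps the independence conjunct defined over all of $\Dom(X)$, and it is genuinely needed, since simply dropping $\tuple u$ and comparing $\pci{\tuple x}{\tuple v}{\tuple w}$ with $\tuple v\perp_{\tuple x}\tuple w$ would be false in general (probabilistic conditional independence and relational embedded multivalued dependencies do not coincide once the variable set is a proper subset of $\Dom(X)$). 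Verifying the absorption rigorously is the one place where a careful counting argument, rather than mechanical rule rewriting, is required.
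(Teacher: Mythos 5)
Your proof is correct, and it follows the same overall strategy as the paper's: decompose the atom into a pairwise-disjoint independence conjunct plus a dependence atom (via Lemma~\ref{rules} and Proposition~\ref{dep} on the probabilistic side, and the relational analogue on the other side), then invoke Theorem~\ref{wongthm}. The difference is where the decomposition lands, and it matters. The paper reduces $\pcixyz$ to $\pci{\tuple x}{\tuple y\setminus \tuple x\tuple z}{\tuple z \setminus \tuple x \tuple y} \wedge \dep(\tuple x,\tuple y\cap\tuple z)$, reduces $\cixyz$ to the analogous relational conjunction citing \cite{gradel13}, and applies Theorem~\ref{wongthm} directly to the independence conjuncts. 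But the variable set of that conjunct omits $(\tuple y\cap\tuple z)\setminus\tuple x$, so in general it is a proper subset of $\Dom(X)$, whereas Theorem~\ref{wongthm} is stated only for saturated atoms; as you observe yourself, the probabilistic/relational equivalence can fail without saturation. Your absorption step --- showing that in the presence of $\dep(\tuple x,\tuple u)$ the atoms $\pci{\tuple x}{\tuple v}{\tuple w}$ and $\pci{\tuple x\tuple u}{\tuple v}{\tuple w}$ coincide over $(X,1)$, because for realised values the blocks selected by fixing $\tuple x$ and by fixing $\tuple x\tuple u$ are identical and for unrealised values every count is zero --- moves the functionally determined block $\tuple u$ into the conditioning tuple, restoring both pairwise disjointness and the saturation condition $\Var = \Dom(X)$, so Theorem~\ref{wongthm} genuinely applies; your direct verification of the matching relational decomposition $(\tuple v\perp_{\tuple x\tuple u}\tuple w)\wedge\dep(\tuple x,\tuple u)$ completes the chain. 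In this respect your argument is more careful than the paper's own proof, which silently relies on exactly this absorption (or an equivalent repair) when it passes from the non-saturated conjunct to Theorem~\ref{wongthm}; what you pay is one extra counting lemma and a hand-proved relational equivalence where the paper simply cites the literature.
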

 \begin{proof}
First note that by Proposition \ref{dep} and Lemma \ref{rules}, $\pcixyz$ is equivalent in multiteam semantics to $\pcib{\tuple x}{\tuple y\setminus \tuple x\tuple z}{\tuple z \setminus \tuple x \tuple y} \wedge \dep(\tuple x,\tuple y\cap\tuple z)$. Moreover, it is known that in team semantics $\cixyz$ is equivalent to $\ci{\tuple x}{\tuple y\setminus \tuple x\tuple z}{\tuple z \setminus \tuple x \tuple y} \wedge \dep(\tuple x,\tuple y\cap\tuple z)$ \cite{gradel13}. Hence, the claim follows by Theorem \ref{wongthm}.\qed
\end{proof}

Note that $\pcixyz$ implies $\cixyz$ also over arbitrary multiplicity functions since non-emptiness of $(X,m)_{\tuple x\tuple y=s(\tuple x\tuple y)}$ and $(X,m)_{\tuple x\tuple z=s(\tuple x\tuple z)}$ implies non-emptiness of $(X,m)_{\tuple x\tuple y\tuple z=s(\tuple x\tuple y\tuple z)}$ by the truth definition in \eqref{def1}. The converse however does not hold; the multiteam $(Y,m)$ depicted in Fig. \ref{pic1} satisfies $\mi{x}{y}$ but violates $\pmi{x}{y}$.

\subsubsection{A diversion: implication problems.}
Results similar in spirit to Proposition~\ref{equivalence_of_probabilistic_and_possibilistic_independence} have been studied in connection to \emph{implication problems} which is a central notion in causal reasoning and database dependency theory. The finite implication problem of independence atoms $\cixyz$ is defined as follows. Given a finite collection $\Sigma\cup \{\varphi\}$ of independence atoms, determine whether for all finite $\A,X$: 
$$\A\models_X \Sigma \Rightarrow \A\models_X\varphi.$$ 

If the above holds, we write $\Sigma\models\varphi$. The implication problem of other types of dependencies is defined analogously. Furthermore, the problem for the atoms $\pcixyz$ can be defined similarly by replacing teams by multiteams. The implication problems of embedded multivalued dependencies (i.e., the atoms $\cixyz$)  and $\pcixyz$ have been extensively studied, e.g., for both atoms the problem is not finitely axiomatisable and for the former the problem is known to be undecidable. On the other hand, there are interesting restricted cases where the implication problems are finitely axiomatisable and equivalent, i.e., for all inputs $\Sigma\cup \{\varphi\}$, $\Sigma\models\varphi$ iff $\Sigma^p\models\varphi^p$, where $\Sigma^p$ and $\phi^p$ are defined by replacing $\cixyz$ by $\pcixyz$. This holds, for example, for marginal independence atoms and for the so-called \emph{saturated} atoms $\varphi$ of the form $\pcixyz$ (or equivalently for $\varphi=\cixyz$) which, as in Theorem \ref{wongthm}, satisfy $\Var(\pcixyz) =\Dom(X)$
 (see the survey by Wong et~al.\ \cite{wbw00}).
Relationships between fragments of conditional independence statements and  embedded multivalued dependencies have recently been  studied, e.g.,  in \cite{DBLP:conf/aaai/Link13,Link13,Link15A,Niepert201329}. Is is also worth noting that the passage from set to  multisets  has interesting consequences also for the study of implication problems of  database dependencies. For example,  while key constraints can be expressed by functional dependencies under team semantics, this is no longer true under multiteam semantics  \cite{DBLP:journals/ipl/KohlerL10}.

Conditional independence is an important notion for expressing structural aspects of probability distributions. \emph{Context specific independence} is a variant of $\pcixyz$ expressing independence in a context where the values of some variables of $\tuple x$ are restricted to range over a subset of all possible values \cite{Boutilier:1996:CIB:2074284.2074298,DBLP:journals/datamine/PensarNKC15}. The next simple example shows how disjunction can be used to express context specific independence statements in $\FO( \bbot_{\rm c})$. The example shows that combining $\pcixyz$ with the logical connectives and quantifiers available in $\FO( \bbot_{\rm c})$ provide us with powerful means to define interesting generalisations of conditional independence. The definability of context specific independence using disjunction has been pointed out in  \cite{CKV2015}.

\begin{example} Let $A=\{0,1\}$ and $X$ be a multiteam of $A$ with domain $\Dom(X)=\{x_0,x_1, \ldots ,x_n\}$. Now in $X$ the variable $x_0$ is said to be contextually independent of $x_2$ given $x_1=0$, denoted by
 \begin{equation}\label{csi}
  x_0 \perp x_2 \mid x_1=0,
  \end{equation}    
 if for all $s\colon \{x_0,x_1,x_2\}\rightarrow A$ such that $s(x_1)=0$ it holds that 
\begin{align*} 
|(X,m)_{x_0x_1= s(x_0x_1)}| \cdot  &|(X,m)_{x_1x_2=s(x_1x_2)}|\\
&=|(X,m)_{x_0x_1x_2=s(x_0x_1x_2)}|\cdot|(X,m)_{x_1=s(x_1)}|.	
\end{align*}

It is now straightforward to check that \eqref{csi} can be equivalently expressed by the $\FO( \bbot_{\rm c})$-formula
$(x_1\neq c) \vee \big(x_1=c \wedge (x_0 \bbot_{x_1} x_2)\big),$ where $c$ is a constant symbol interpreted as $0$. 
\end{example}

\subsection{Probabilistic Notions in Multiteam Semantics}
In this section we investigate some properties of the probabilistic logics we have defined so far.

The set of \emph{free variables} of a formula $\varphi\in\FO(\mathcal{C})$, denoted by $\Fr(\varphi)$, is defined in the obvious manner as in first-order logic. In particular, we define
\begin{align*}
\Fr(\vec x \sub \vec y) &:= \Fr(\vec x \leq \vec y) := \Fr\big(\dep(\vec x,\vec y)\big) := \{\vec x, \vec y\}\\
\Fr\big( \vec y\bbot_{\vec x} \vec z  \big) &:= \Fr\big( \vec y\bot_{\vec x} \vec z  \big) :=  \{\vec x, \vec y, \vec z \}.
\end{align*} 
For $V \sub \Dom(X)$, we define $(X,m) \upharpoonright V:=(X\upharpoonright V, \,n)$
where
$$n(s):= \sum_{\substack{s'\in X,\\ s'\upharpoonright V = s}} m(s').$$
The following locality principle holds by easy structural induction.
\begin{proposition}[Locality]\label{locality}
 Let $\A$ be a multistructure, $(X,m)$ a multiteam, and $V$ a set of variables such that $\Fr(\varphi) \sub V\sub \Dom(X)$. Then for all $\varphi\in\FO(\leq,\linebreak\bbot_{\rm c},\dep(\cdot),\sub,\bot_{\rm c})$ it holds that
$\A\models_{(X,m)}\varphi$ \,iff\, $\A\models_{(X,m)\upharpoonright V}\varphi$.
\end{proposition}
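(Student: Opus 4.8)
The plan is to prove the equivalence by structural induction on $\varphi$, after first recording a few invariance properties of the restriction operation $(X,m)\mapsto(X,m)\upharpoonright V$. Write $n$ for the multiplicity function of $(X,m)\upharpoonright V$, so that $n(t)=\sum_{s\upharpoonright V=t}m(s)$. The properties I need are: (a) for every tuple $\tuple x$ with $\Var(\tuple x)\sub V$ and every value $\tuple a$ one has $|(X,m)_{\tuple x=\tuple a}|=|((X,m)\upharpoonright V)_{\tuple x=\tuple a}|$, because $\Var(\tuple x)\sub V$ forces any two assignments sharing a $V$-projection to agree on $\tuple x$, so the fibrewise summation defining $n$ keeps the $\tuple x=\tuple a$ count intact; (b) in particular $|(X,m)|=|(X,m)\upharpoonright V|$; (c) the support is preserved, i.e.\ $((X,m)\upharpoonright V)^+=(X^+)\upharpoonright V$ as teams; and (d) restriction is monotone with respect to $\sub$ and commutes with disjoint union, $((Y,k)\uplus(Z,l))\upharpoonright V=(Y,k)\upharpoonright V\,\uplus\,(Z,l)\upharpoonright V$. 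All four are immediate from the definition of $n$.

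The base cases are then short. For the first-order literals and for the two probabilistic atoms $\tuple x\leq\tuple y$ and probabilistic conditional independence, satisfaction is defined entirely through the counts $|(X,m)_{\tuple x=\tuple a}|$ and $|(X,m)|$ for tuples contained in $\Fr(\varphi)\sub V$; hence (a) and (b) give the equivalence at once. For the three possibilistic atoms $\dep(\tuple x,\tuple y)$, $\tuple x\sub\tuple y$ and conditional independence, satisfaction is by definition satisfaction of the team $X^+$ in ordinary team semantics; using (c) together with the known locality of these atoms in set-based team semantics (which for the literals is immediate from Proposition~\ref{extension}), the equivalence reduces to team locality applied to $X^+$.

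For the induction step, conjunction is handled by applying the induction hypothesis to both conjuncts, since $\Fr(\psi),\Fr(\theta)\sub\Fr(\psi\wedge\theta)\sub V$. For disjunction, the forward direction uses (d): a lax split $(Y,k),(Z,l)$ of $(X,m)$ restricts to a lax split $(Y,k)\upharpoonright V,(Z,l)\upharpoonright V$ of $(X,m)\upharpoonright V$ (monotonicity gives the two inclusions, and commutation with $\uplus$ gives the covering condition), after which the induction hypothesis applies with the same $V$. The quantifier cases are handled by observing that $\upharpoonright(V\cup\{x\})$ commutes with the supplement operations up to a reindexing of the duplicate copies: since the number $n(t)$ of copies of a $V$-projection $t$ equals the number of pairs $(s,i)\in\cset{(X,m)}$ with $s\upharpoonright V=t$, any bijection between these two index sets transports a witness $F\colon\cset{(X,m)}\to\Po^+$ to a witness $F'\colon\cset{(X,m)\upharpoonright V}\to\Po^+$ with $(X,m)[F/x]\upharpoonright(V\cup\{x\})=((X,m)\upharpoonright V)[F'/x]$, and symmetrically; the universal case is the deterministic instance where each copy receives all of $(A,n)$. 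One then closes by the induction hypothesis applied to $\psi$ with the set $V\cup\{x\}$ (we may assume $x\notin V$, renaming the bound variable if necessary).

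The only step with real content is the backward direction of the disjunction clause, where a lax split $(Y',k'),(Z',l')$ of $(X,m)\upharpoonright V$ must be lifted to a lax split of $(X,m)$ whose $V$-restrictions are $(Y',k')$ and $(Z',l')$. This has to be done fibre by fibre: for a fixed projection $t$ with original multiplicities $m_1,\dots,m_r$ (so $n(t)=\sum_i m_i$) I must choose $k_i,l_i$ with $k_i,l_i\le m_i$, $k_i+l_i\ge m_i$, $\sum_i k_i=k'(t)$ and $\sum_i l_i=l'(t)$. I expect this to be the main obstacle, and the point is that it is feasible exactly because the split constraints at the projected level, $k'(t)\le n(t)$, $l'(t)\le n(t)$ and $k'(t)+l'(t)\ge n(t)$, are precisely what is needed: one first fills the $k_i$ greedily to sum to $k'(t)$ (possible as $k'(t)\le\sum_i m_i$), and then chooses each $l_i$ in the interval $[\,m_i-k_i,\,m_i\,]$ to sum to $l'(t)$, which is possible since $\sum_i(m_i-k_i)=n(t)-k'(t)\le l'(t)\le n(t)=\sum_i m_i$. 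Applying the induction hypothesis to the lifted split then yields the claim. Under strict multiteam semantics the same step is easier, since the fibrewise partition $l_i=m_i-k_i$ is forced.
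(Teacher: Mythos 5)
Your proof is correct and follows exactly the route the paper intends: the paper offers no details beyond asserting that the claim ``holds by easy structural induction,'' and your argument is that structural induction carried out in full, with the right invariance properties of $\upharpoonright V$ (count preservation, support preservation, compatibility with $\subseteq$ and $\uplus$) and a correct treatment of the one genuinely nontrivial step, the fibrewise lifting of a lax split of $(X,m)\upharpoonright V$ back to $(X,m)$.
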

The notion of flatness is generalised to the multiteam setting as follows.
\begin{definition}[Weak flatness]
We say that a formula $\varphi$ is \emph{weakly flat} if for all multistructures $\A$ and for all multiteams $(X,m)$ it holds that
$$\A \models_{(X,m)} \varphi \quad\Leftrightarrow\quad \A\models_{(X,n)} \varphi,$$
where $n$ agrees with $m$ on all $s$ with $m(s) =0$, and otherwise maps all $s$ to $1$. The multiteam $(X,n)$ is then called the \emph{weak flattening} of $(X,m)$. A logic is called \emph{weakly flat} if every formula of this logic is weakly flat.
\end{definition}

 Dependence, conditional independence, and inclusion atoms are insensitive to multiplicities, and using structural induction one can prove the following proposition.
\begin{proposition}
$\FO(\dep(\cdot),\sub,\bot_{\rm c})$ is weakly flat.
\end{proposition}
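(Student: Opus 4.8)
The plan is to prove the proposition by structural induction on $\varphi$, guided throughout by a single observation: the weak flattening of an arbitrary multiteam $(W,p)$ is $(W,p')$ with $p'(s)=1$ when $p(s)\geq 1$ and $p'(s)=0$ otherwise, so it is determined \emph{entirely} by the support $W^+:=\{s\mid p(s)\geq 1\}$. Consequently, if two modified multiteams happen to share the same support, then their weak flattenings are literally the same multiteam, and the induction hypothesis transfers satisfaction between them for free. I will use this fact repeatedly.

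First I would dispatch the base cases. For the first-order literals the satisfaction clause is guarded by ``if $m(s)\geq 1$ then $\dots$'', and since $m(s)\geq 1 \Leftrightarrow n(s)\geq 1$ by the definition of the weak flattening $(X,n)$, the multiteams $(X,m)$ and $(X,n)$ verify exactly the same literals. For the dependency atoms $\dep(\tuple x,\tuple y)$, $\tuple x\sub\tuple y$, and $\cixyz$, satisfaction in $(X,m)$ is \emph{by definition} satisfaction in the underlying team $X^+=\{s\mid m(s)\geq 1\}$, which coincides for $(X,m)$ and $(X,n)$; hence these atoms are insensitive to multiplicities, as already remarked before the statement. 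Conjunction is immediate from the induction hypothesis, since both conjuncts are evaluated on the same multiteam.

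The two genuinely interesting cases are disjunction and the existential quantifier, and both follow the same recipe: transport a witness across the flattening so that the resulting modified multiteams have matching supports, then invoke the induction hypothesis. For $\psi\lor\theta$, given a witnessing splitting $(Y,k),(Z,l)$ of $(X,m)$, I pass to the $0/1$ splitting of $(X,n)$ obtained by flattening $(Y,k)$ and $(Z,l)$: their supports lie inside $X^+$ and jointly cover it, so the defining constraints $(Y,\cdot),(Z,\cdot)\subseteq(X,n)$ and $(X,n)\subseteq(Y,\cdot)\uplus(Z,\cdot)$ check out pointwise; conversely, a $0/1$ splitting of $(X,n)$ is scaled up by $m$. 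In either direction the induction hypothesis applied to $\psi$ and $\theta$ closes the case. For $\exists x\psi$, given a witness $F$ on $(X,m)$, I collapse the copy index by defining a witness $F'$ on $(X,n)$ that assigns to each $s\in X^+$ the union $\bigcup_{1\leq i\leq m(s)} B_{(s,i)}$ of the underlying sets $B_{(s,i)}$ of the value-multisets $F((s,i))$, all taken with multiplicity $1$; conversely, given $F'$ on $(X,n)$, I define $F$ on $(X,m)$ by repeating $F'((s,1))$ across all copies $(s,i)$. In both directions one verifies that $(X,m)[F/x]$ and $(X,n)[F'/x]$ have exactly the same support, so their weak flattenings coincide and the induction hypothesis on $\psi$ finishes the step. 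The universal quantifier is the easiest quantifier case: writing $(A,d)$ for the domain multiset of $\A$ (renamed to avoid clashing with the flattening $n$), the modified multiteams $(X,m)[(A,d)/x]$ and $(X,n)[(A,d)/x]$ plainly share the support $\{s(a/x)\mid s\in X^+,\ d(a)\geq 1\}$.

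The main obstacle is the bookkeeping in the existential case, where $F$ is indexed by the copies $(s,i)$ of each assignment and may hand out different value-multisets to different copies. The crux is to verify that collapsing these choices into a single union (and, in the converse direction, duplicating one choice across all copies) really yields modified multiteams with identical supports, so that the support-only dependence of weak flattening applies. Once this is pinned down, the disjunction case is the same pattern with only pointwise multiplicity inequalities to check, and every remaining case is routine.
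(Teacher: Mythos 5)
Your proof is correct and takes essentially the route the paper intends: the paper offers no detailed argument, only the remark that dependence, inclusion, and conditional independence atoms are insensitive to multiplicities (being evaluated on the support team $X^+$) together with the claim that the rest follows by structural induction, which is exactly the induction you carry out. Your observation that weak flattening depends only on the support, and your explicit transfer of witnesses in the disjunction and existential cases, simply supply the bookkeeping the paper leaves to the reader.
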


On the other hand, probabilistic dependencies do not satisfy weak flatness as shown in the next example. 

\begin{example}\label{ex:pic1}
For instance $(Y,m)$, illustrated in Fig.~\ref{pic1}, does not satisfy $\pmixy$ but its weak flattening $(Y,n)$ does. 	
\end{example}

Analogously, the probabilistic inclusion atom is not weakly flat, and therefore neither of these atoms can be expressed in $\FO(\dep(\cdot),\sub,\bot_{\rm c})$.

\begin{figure}[ht]
\begin{center}
 \scalebox{1.2}[1.1]{
\begin{tabular}{cC{3mm}C{3mm}C{9mm}}
\multicolumn{4}{c}{$(Y,m)$}\\\toprule
&$x$& $y$ & $m(s_i)$\\\midrule
$s_0$&$0$ & $0 $ & $2$ \\
$s_1$&$0$ &$1 $ & $1$\\
$s_2$&$1$ &$ 0$ & $1$\\
$s_3$&$1$ &$ 1$ & $1$\\\bottomrule

\end{tabular}

\qquad

\begin{tabular}{cC{3mm}C{3mm}C{9mm}}
\multicolumn{4}{c}{$(Y,n)$}\\\toprule
&$x$& $y$ & $n(s_i)$\\\midrule
$s_0$&$0$ & $0 $ & $1$ \\
$s_1$&$0$ &$1 $ & $1$\\
$s_2$&$1$ &$ 0$ & $1$\\
$s_3$&$1$ &$ 1$ & $1$\\\bottomrule

\end{tabular}
}
\caption{Assignments for multiteams in Example~\ref{ex:pic1}.\label{pic1}}
\end{center}
\end{figure}

A formula $\varphi$ is called \emph{union closed} (in the multiteam setting) if for all multistructures $\A$ and all multiteams $(X,m),(Y,n)$: if $\A\models_{(X,m)} \varphi$ and $\A\models_{(Y,n)} \varphi$, then $\A\models_{(Z,h)}\varphi$, where $(Z,h)=(X,m)\uplus (Y,n)$.  
A logic is called union closed if all its formulae are union closed. It is easy to show by induction on the structure of formulae that probabilistic inclusion logic satisfies union closure.
\begin{proposition}\label{multiuc}
$\FO(\leq,\sub)$ is union closed.
\end{proposition}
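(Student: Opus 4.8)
The plan is to prove Proposition~\ref{multiuc} by induction on the structure of formulae in $\FO(\leq,\sub)$. The statement to establish is union closure: if $\A\models_{(X,m)}\varphi$ and $\A\models_{(Y,n)}\varphi$, then $\A\models_{(Z,h)}\varphi$ where $(Z,h)=(X,m)\uplus(Y,n)$. I would first dispose of the atomic cases and then handle the connectives and quantifiers.

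For the first-order literals $x=y$, $x\neq y$, $R(\tuple x)$, and $\neg R(\tuple x)$, satisfaction requires that \emph{every} assignment $s$ with positive multiplicity meet the relevant condition. Since the support of $h$ is contained in the union of the supports of $m$ and $n$ (any $s$ with $h(s)\geq 1$ has $m(s)\geq 1$ or $n(s)\geq 1$), the condition holds for all positively-weighted $s$ in $(Z,h)$. For the probabilistic inclusion atom $\tuple x\leq\tuple y$, the key observation is that the restricted-multiteam size is additive under disjoint union: $|(Z,h)_{\tuple x=\tuple a}| = |(X,m)_{\tuple x=\tuple a}| + |(Y,n)_{\tuple x=\tuple a}|$, because $h$ is obtained by pointwise adding $m$ and $n$ on overlapping assignments. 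Then from $|(X,m)_{\tuple x=\tuple a}|\leq|(X,m)_{\tuple y=\tuple a}|$ and the analogous inequality for $(Y,n)$, adding the two inequalities yields $|(Z,h)_{\tuple x=\tuple a}|\leq|(Z,h)_{\tuple y=\tuple a}|$ for all $\tuple a$, which is exactly $\A\models_{(Z,h)}\tuple x\leq\tuple y$. The classical inclusion atom $\tuple x\subseteq\tuple y$ is evaluated over the underlying team $Z^+=\{s\mid h(s)\geq 1\}=X^+\cup Y^+$, and inclusion atoms are known to be union closed in ordinary team semantics, so this case reduces to the set-semantics fact.

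For the connectives I would argue as follows. Conjunction is immediate: the induction hypotheses applied to each conjunct over $(Z,h)$ combine directly. For disjunction, given witnessing submultiteams $(X_1,m_1),(X_2,m_2)$ for $(X,m)$ and $(Y_1,n_1),(Y_2,n_2)$ for $(Y,n)$, I would take the component-wise disjoint unions $(X_1,m_1)\uplus(Y_1,n_1)$ and $(X_2,m_2)\uplus(Y_2,n_2)$ as the split of $(Z,h)$; one checks these are submultiteams of $(Z,h)$ whose disjoint union covers $(Z,h)$, and the induction hypothesis gives satisfaction of each disjunct. The quantifier cases are the subtler part: for $\exists x$ one is handed functions $F_X\colon\cset{(X,m)}\to\Po^+((A,n))$ and $F_Y$ witnessing satisfaction, and one must assemble a single witness $F_Z$ on $\cset{(Z,h)}$ so that the modified multiteam $(Z,h)[F_Z/x]$ equals (or contains) $(X,m)[F_X/x]\uplus(Y,n)[F_Y/x]$. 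For $\forall x$ the supplementation is uniform, so additivity of the $[(A,n)/x]$ operation under $\uplus$ suffices.

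The main obstacle I expect is the existential quantifier case, specifically the bookkeeping needed to glue $F_X$ and $F_Y$ into one function on the canonical set representative $\cset{(Z,h)}$. The difficulty is that $h(s)=m(s)+n(s)$ for assignments $s$ in the overlap of $X^+$ and $Y^+$, so the indexed copies $(s,i)$ for $1\leq i\leq h(s)$ in $\cset{(Z,h)}$ must be partitioned into the first $m(s)$ copies, on which $F_Z$ imitates $F_X((s,i))$, and the remaining $n(s)$ copies, on which $F_Z$ imitates $F_Y((s,i-m(s)))$; one then verifies that the resulting $(Z,h)[F_Z/x]$ agrees with the disjoint union of the two separately-modified multiteams, so that the induction hypothesis applies. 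Once this indexing is set up carefully, the verification is routine, and the proposition follows by induction.
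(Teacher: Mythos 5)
Your proof is correct and follows exactly the route the paper indicates: the paper itself only remarks that the proposition is shown ``by induction on the structure of formulae,'' and your proposal supplies the details of that induction correctly, including the two delicate points (additivity of $|(Z,h)_{\tuple x=\tuple a}|$ under $\uplus$ for the atom $\tuple x\leq\tuple y$, and the splitting of the $h(s)=m(s)+n(s)$ indexed copies in $\cset{(Z,h)}$ to glue the existential witnesses so that $(Z,h)[F_Z/x]=(X,m)[F_X/x]\uplus(Y,n)[F_Y/x]$).
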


\subsection{Probabilistic Notions in Team Semantics}
In this section we examine probabilistic independence and inclusion logic in the (set) team semantics setting. Note that all the models considered in this section are usual first-order structures.

Satisfaction of probabilistic atoms in team semantics setting  is defined by adding a constant multiplicity function. 
\begin{definition} 
Let $\A$ be a model, $X$ a team over $\A$, and $\varphi$ be a probabilistic atom of the form $\pcixyz$ or $\tuple x \leq \tuple y$. Then the satisfaction relation $\models_X$ is defined as follows:
$$\A \models_X \varphi \textrm{ iff }\A \models_{(X,1)} \varphi,$$
where $1$ is the constant function that maps all assignments of $X$ to $1$. 
\end{definition}

The next theorem shows that, since probabilistic inclusion and independence atoms are expressible (in the team semantics setting) in $\FO(\bot_{\rm c})$ relative to teams of fixed domain, their addition does not increase the expressive power of $\FO(\bot_{\rm c})$.
\begin{theorem}\label{setthm}
Let $\varphi \in \FO(\leq,\bbot_{\rm c},\dep(\cdot),\sub,\bot_{\rm c})$ be a sentence. Then there exists a sentence $\varphi'\in \FO(\bot_{\rm c})$ such that for all models $\A$ it holds that $\A \models \varphi$ iff $\A \models \varphi'$.
\end{theorem}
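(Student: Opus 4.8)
The plan is to strip away, atom by atom, everything in $\varphi$ that is not already part of $\FO(\bot_{\rm c})$. Two of the four extra atom types are disposed of by known facts: in team semantics the dependence atom $\dep(\tuple x,\tuple y)$ is equivalent to $\ci{\tuple x}{\tuple y}{\tuple y}$, and by Galliani's interdefinability of independence atoms with inclusion and exclusion atoms \cite{Galliani:2011} every inclusion atom $\tuple x\sub\tuple y$ is expressible in $\FO(\bot_{\rm c})$. Hence $\FO(\dep(\cdot),\sub,\bot_{\rm c})\subseteq\FO(\bot_{\rm c})$, and it remains only to remove the probabilistic atoms $\leq$ and $\bbot_{\rm c}$.

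For these I would prove a local definability claim: for every finite variable set $D$ and every probabilistic atom $\alpha\in\{\tuple x\leq\tuple y,\ \pcixyz\}$ with $\Var(\alpha)\subseteq D$, there is a formula $\theta^D_\alpha\in\FO(\bot_{\rm c})$ with $\Fr(\theta^D_\alpha)\subseteq D$ such that $\A\models_X\alpha\Leftrightarrow\A\models_X\theta^D_\alpha$ for every model $\A$ and every team $X$ of domain $D$. The route is to regard the class of $D$-teams satisfying $\alpha$ as a property of the $|D|$-ary relation coding $X$, to show this property is definable in existential second-order logic and contains the empty team, and then to invoke the characterisation of the formula-level expressive power of independence logic, namely that every $\Sigma^1_1$-definable team property possessing the empty-team property is definable in $\FO(\bot_{\rm c})$ \cite{Galliani:2011}. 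The empty-team property is immediate, since on $X=\emptyset$ every cardinality occurring in the semantics of $\leq$ and $\bbot_{\rm c}$ is $0$, so $0\le 0$ respectively $0=0$ holds.

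The only genuine work is the $\Sigma^1_1$ encoding of the two counting conditions. For $\tuple x\leq\tuple y$ I would rephrase ``$|\{s\in X: s(\tuple x)=\tuple a\}|\le|\{s\in X: s(\tuple y)=\tuple a\}|$ for all $\tuple a$'' as the existence of a single injection $H$ of $X$ into $X$ mapping each $s$ to some $s'$ with $s'(\tuple y)=s(\tuple x)$; such an $H$ exists exactly when each block $\{s: s(\tuple x)=\tuple a\}$ injects into $\{s': s'(\tuple y)=\tuple a\}$, i.e.\ precisely when the required inequalities hold, and being total, injective and value-matching, $H$ is described by first-order conditions over the team relation, so the property is $\Sigma^1_1$. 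For $\pcixyz$ I would turn each instance of the product equation into the existence of a bijection between the two Cartesian products of the first-order definable blocks $\{s: s(\tuple x\tuple y)=\tuple a\tuple b\}$, and so on; a single second-order relation whose slices indexed by $\tuple a\tuple b\tuple c$ are required, by a first-order condition universally quantified over $\tuple a\tuple b\tuple c$, to be such bijections witnesses all the equalities at once, again giving a $\Sigma^1_1$ definition. This multiplicative case is where I expect the main difficulty.

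Finally I would assemble $\varphi'$ by substitution. After renaming bound variables apart, the domain of the team reaching any occurrence of a subformula of the sentence $\varphi$ is syntactically fixed by the quantifiers in whose scope it lies; replacing each occurrence of a probabilistic atom $\alpha$ by the corresponding $\theta^D_\alpha$, and rewriting the remaining $\dep$ and $\sub$ atoms as above, produces a sentence $\varphi'\in\FO(\bot_{\rm c})$. Correctness is a routine compositionality argument: each subformula is evaluated on a team of exactly its syntactically determined domain $D$, on such teams $\alpha$ and $\theta^D_\alpha$ agree, and therefore the value of $\varphi$ on the initial team $\{\emptyset\}$ is preserved, yielding $\A\models\varphi\Leftrightarrow\A\models\varphi'$ for all $\A$.
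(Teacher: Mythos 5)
Your proposal is correct and takes essentially the same route as the paper's proof: eliminate $\dep(\cdot)$ and $\sub$ by known interdefinability results, observe that the probabilistic atoms define $\ESO$-expressible properties of teams of fixed domain, and then invoke Galliani's theorem that $\ESO$-definable team properties translate into $\FO(\bot_{\rm c})$, finishing by substitution into the sentence. The only difference is one of detail: the paper leaves the $\ESO$-definability step as ``easy to see,'' whereas you make it explicit via the injection encoding for $\leq$ and the slice-indexed bijection encoding for $\bbot_{\rm c}$, and you also verify the empty-team property needed for Galliani's result.
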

\begin{proof}
First note that inclusion and dependence atoms can be expressed in $\FO(\bot_{\rm c})$ \cite{Galliani:2011,gradel13}. Also it is easy to see that one can construct existential second-order logic sentences that capture probabilistic inclusion and conditional independence atoms over teams of fixed domain. Namely, for all $\varphi$ of the form $\pcixyz$ or $\tuple x \leq \tuple y$ and all $ V \supseteq\Fr(\varphi)$, there exists an $\ESO$ sentence $\varphi^*(R)$, where $R$ is a $k$-ary relation symbol for $k= |\Var(\varphi)|$, such that for all $\A$ and $X$ with $\Dom(X) =V$,
$$\A \models_X \varphi \Leftrightarrow (\A,\textrm{Rel}(X))\models \varphi^*(R),$$
where $\text{Rel}(X)= \{(s(x_1), \ldots ,s(x_k))\mid s\in X\}$. All $\ESO$-definable properties of teams translate into $\FO(\bot_{\rm c})$ \cite{Galliani:2011}, and hence the formula $\varphi'$ can be constructed from $\varphi$ by replacing each probabilistic atom with a correct $\FO(\bot_{\rm c})$-translation. \qed
\end{proof}

Note that probabilistic inclusion atoms are not closed under (set) unions in team semantics, and hence they cannot be expressed in $\FO(\sub)$ as shown in the following example. 

\begin{example}\label{ex:propinc_not_union_closed}
Let $\A$ be a first-order structure with domain $\{0,1,2\}$, and $s:=\{(x,0), (y,1), (z,0) \}$, $s':=\{(x,1), (y,0), (z,1) \}$, and $s'':=\{(x,0), (y,1), (z,2) \}$ be assignments. Define $X:=\{s,s'\}$ and $Y:=\{s',s''\}$. Now $\A\models_X x\leq y$, $\A\models_Y x\leq y$, but $\A\not\models_{X\cup Y} x\leq y$. 
\end{example}

\subsection{Strict multiteam semantics}\label{strictsect}
We briefly consider properties of related logics under strict multiteam semantics.
\begin{proposition}
Over strict multiteam semantics $\FO(\dep(\cdot))$ is weakly flat.
\end{proposition}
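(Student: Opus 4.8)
The plan is to prove the proposition by structural induction on $\varphi\in\FO(\dep(\cdot))$, establishing weak flatness \emph{in tandem} with the auxiliary downward closure property: if $\A\models_{(X,m)}\varphi$ and $(X',m')\subseteq(X,m)$, then $\A\models_{(X',m')}\varphi$. I expect to need the two statements simultaneously, because under strict semantics the forward direction of weak flatness for the ``splitting'' operators will rely on downward closure of the immediate subformulae. For disjunction, downward closure is checked by splitting a given sub-multiteam $(X',m')$ along an existing strict partition $(Y,k)\uplus(Z,l)=(X,m)$ via $k'(s):=\min(k(s),m'(s))$ and $l'(s):=m'(s)-k'(s)$; for $\exists$ one restricts the strict witness function to $\cset{(X',m')}$; the quantifier and Boolean cases are routine.

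For the base cases -- first-order literals and dependence atoms $\dep(\tuple x,\tuple y)$ -- satisfaction depends only on the support $X^+=\{s\in X\mid m(s)\geq 1\}$ (by the clauses of Definition~\ref{laxmulti} and by the reduction of $\dep$ to $X^+$). Since $(X,m)$ and its weak flattening share the same support, weak flatness is immediate, and downward closure follows from downward closure of dependence atoms in ordinary team semantics. Conjunction is direct from the induction hypothesis. For $\forall x\psi$ the key observation is that, writing $(A,d)$ for the domain of $\A$ and $(X,n)$ for the weak flattening of $(X,m)$, the weak flattenings of $(X,m)[(A,d)/x]$ and of $(X,n)[(A,d)/x]$ \emph{coincide}, because these two multiteams have identical support (multiplying by $d$ preserves positivity since $m(s)\geq 1\Leftrightarrow n(s)\geq 1$); hence weak flatness of $\psi$ transfers directly and downward closure of $\forall x\psi$ reduces to that of $\psi$.

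The main work, and the expected obstacle, lies in disjunction and existential quantification, where strict semantics distributes a single assignment in ways the weak flattening cannot mirror verbatim. For $\psi\lor\theta$, given a strict split $(Y,k)\uplus(Z,l)=(X,m)$, I would flatten by assigning each support element $s$ wholly to the $\psi$-side when $k(s)\geq 1$ and otherwise to the $\theta$-side; the $\psi$-part is then exactly the weak flattening of $(Y,k)$ (so weak flatness of $\psi$ applies), while the $\theta$-part is a \emph{sub}-multiteam of the weak flattening of $(Z,l)$, which is precisely where downward closure of $\theta$ is invoked. The converse direction is cleaner, since a strict split of $(X,n)$ already has $\{0,1\}$-valued parts that can be rescaled componentwise back to $(X,m)$. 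For $\exists x\psi$ the subtlety is that a strict witness function $F$ may assign distinct witnesses $b_{s,i}$ to different copies $(s,i)$ of the same assignment. Passing from $(X,m)$ to $(X,n)$, I would retain a single witness per support element, obtaining $G$ with $(X,n)[G/x]$ a sub-multiteam of the weak flattening of $(X,m)[F/x]$, and close the case by downward closure of $\psi$. Passing from $(X,n)$ to $(X,m)$, I would copy the chosen witness $c_s$ to every copy $(s,i)$, observing that $(X,m)[F/x]$ and $(X,n)[G/x]$ then have the same support and hence the same weak flattening, so weak flatness of $\psi$ finishes.

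The hard part throughout will be the bookkeeping of the strict constraints -- exact partitions for $\lor$ and singleton, multiplicity-one images for $\exists$ -- and verifying that each flattening or rescaling yields an \emph{admissible} strict split or witness function. Once downward closure is carried along as a companion invariant, the transfer of satisfaction in every case becomes mechanical.
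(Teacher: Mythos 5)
The paper states this proposition without any proof (it is one of several results the authors leave as an easy structural induction), so there is no official argument to compare against; judged on its own, your plan --- induction on $\varphi\in\FO(\dep(\cdot))$ carrying \emph{downward closure} as a companion invariant --- is the natural route and is essentially correct. Your treatments of the atoms, conjunction, universal quantification, the disjunction case (both the $\min$-splitting for downward closure and the rescaling $k(s):=k'(s)\cdot m(s)$, $l(s):=l'(s)\cdot m(s)$ for the backward direction), and the backward existential case (copying the witness $c_s$ to every copy $(s,i)$, so that the two resulting multiteams share a support and hence a weak flattening) all check out under the strict constraints.

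Two remarks. First, one step is wrong as stated: in the forward existential case you claim that $(X,n)[G/x]$ is a sub-multiteam of the \emph{weak flattening} of $(X,m)[F/x]$. This can fail when $x\in\Dom(X)$ is requantified: two distinct support assignments $s\neq s'$ agreeing outside $x$ can satisfy $s(b/x)=s'(b'/x)$ for their retained witnesses, so $(X,n)[G/x]$ acquires an assignment of multiplicity $2$, whereas the weak flattening caps every multiplicity at $1$. The repair is immediate and stays inside your own framework: $(X,n)[G/x]\subseteq(X,m)[F/x]$ holds directly (each term of the disjoint union defining the former is a term of the latter), so downward closure of $\psi$ applies without any detour through the flattening. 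Second, a simplification you missed: since $n(s)\leq m(s)$ pointwise, the weak flattening $(X,n)$ is itself a sub-multiteam of $(X,m)$, so the \emph{entire} forward direction of weak flatness --- for every connective at once --- is a single application of the downward-closure invariant; only the backward direction (from $(X,n)$ to $(X,m)$) needs your case-by-case constructions. Incorporating these two points, your proof goes through.
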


The logics $\FO(\bot_{\rm c})$ and $\FO(\sub)$ are not weakly flat under strict multiteam semantics as shown in the next example.

\begin{example}\label{ex:pic0}
 For instance $(X,m)$, illustrated in Fig.~\ref{pic0}, satisfies $ (x\sub z)\vee (y\sub z)$ in strict semantics but its weak flattening $(X,n)$ does not.	
\end{example}

\begin{figure}[ht]
\begin{center}
 \scalebox{1.05}[1.05]{
\begin{tabular}{cC{3mm}C{3mm}C{3mm}C{9mm}}
\multicolumn{5}{c}{$(X,m)$}\\\toprule
&$x$& $y$ & $z$ & $m(s_i)$\\\midrule
$s_1$&$0$ & $0 $ & $1$ & $2$ \\
$s_2$&$1$ & $2 $ & $0$ & $1$ \\
$s_3$&$2$ & $1 $ & $0$ & $1$\\\bottomrule
\end{tabular}

\qquad

\begin{tabular}{cC{3mm}C{3mm}C{3mm}C{8mm}}
\multicolumn{5}{c}{$(X,n)$}\\\toprule
&$x$& $y$ & $z$ & $n(s_i)$\\\midrule
$s_1$&$0$ & $0 $ & $1$ & $1$ \\
$s_2$&$1$ & $2 $ & $0$ & $1$ \\
$s_3$&$2$ & $1 $ & $0$ & $1$\\\bottomrule
\end{tabular}
}
\caption{Assignments for teams in Example~\ref{ex:pic0}.\label{pic0}}
\end{center}
\end{figure}

Similarly, one can show that $\FO(\leq,\sub)$ is not union closed under strict multiteam semantics.
Moreover one can show that Propositions \ref{prop:sets} and \ref{locality} hold also under strict multiteam semantics.

\section{Approximate Operators}\label{Appr}
Now we will turn to define an existential and a universal \emph{approximate} operator which allows one to state truth of formulas not with respect to the \emph{full} team but with respect to a \emph{ratio} of the team. 
The main motivator for this approach is the important application in database theory to be able to model the truth of properties in databases that may contain some faulty data. 
Moreover, in practice, duplicates occur frequently in databases for a multitude of reasons. Thus the study of database dependencies, such as inclusion dependencies and foreign key constraints, in combination with approximate operators is an important topic as it explains inherent properties of a given dataset. In this section we consider multiteam semantics.

\begin{definition}
Let $\A$ be a multistructure, and $(X,m)$ a multiteam over $\A$, and $p\in[0,1]$ a rational number.
\begin{align*}
 \A\models_{(X,m)}\F p\varphi &\Leftrightarrow \exists (Y,n)\subseteq (X,m), |(Y,n)|\ge p\cdot|(X,m)|: \A\models_{(Y,n)}\varphi,\\
  \A\models_{(X,m)}\G p\varphi &\Leftrightarrow \forall (Y,n)\subseteq (X,m), |(Y,n)|\ge p\cdot|(X,m)|: \A\models_{(Y,n)}\varphi
\end{align*}
\end{definition}

The previous definition generalises the notion of approximate dependence atoms $\dep_p(\cdot)$, introduced by V{\"a}{\"a}n{\"a}nen \cite{v14}, in the following sense: $\dep_{1-p}(\tuple x,y)$ is equivalent to the formula $\F p\dep(\tuple x,y)$.

In the following we observe that distributivity does not hold in general with respect to $\F p$.
\begin{figure}
\begin{center}
 \scalebox{1.05}[1.05]{
\begin{tabular}{cC{3mm}C{3mm}C{3mm}C{9mm}}
\multicolumn{5}{c}{$(X,1)$}\\\toprule
&$x$& $y$ & $z$ & $1(s_i)$\\\midrule
$s_1$&$0$ & $0 $ & $1$ & $1$ \\
$s_2$&$0$ & $1 $ & $0$ & $1$ \\
$s_3$&$0$ & $1 $ & $2$ & $1$\\\bottomrule
\end{tabular}
\quad
\begin{tabular}{cC{3mm}C{3mm}C{3mm}C{9mm}}
\multicolumn{5}{c}{$(Y,1)$}\\\toprule
&$x$& $y$ & $z$ & $1(s_i)$\\\midrule
$s_1$&$0$ & $0 $ & $1$ & $1$ \\
$s_3$&$0$ & $1 $ & $2$ & $1$\\\bottomrule\\
\end{tabular}\quad
\begin{tabular}{cC{3mm}C{3mm}C{9mm}C{9mm}C{9mm}C{9mm}}
\multicolumn{7}{c}{$(Z, \cdot )$}\\\toprule
     &$x$& $y$ & $m(s_i)$ & $n(s_i)$ & $k(s_i)$ & $\ell(s_i)$\\\midrule
$s_1$&$0$ & $1$ & $1$ & $0$ & $1$ & $0$\\
$s_2$&$1$ & $0$ & $1$ & $1$ & $1$ & $0$ \\
$s_3$&$0$ & $0$ & $1$ & $1$ & $0$ & $1$\\\bottomrule
\end{tabular}
}
\caption{Assignments for multiteams in Examples~\ref{ex:pic4} and \ref{ex:unionclosurefails}.\label{pic45}}
\end{center}
\end{figure}
\begin{proposition}
	It is not true that $\F p(\varphi\lor\psi)\equiv\F p\varphi\lor\F p\psi$.
\end{proposition}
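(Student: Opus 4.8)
The plan is to refute the equivalence by exhibiting a single counterexample: a multistructure $\A$, a multiteam, a rational $p$, and formulas $\varphi,\psi$ for which exactly one side holds. I would take $\A$ to be any multistructure whose domain contains $\{0,1,2\}$, the multiteam $(X,1)$ of Fig.~\ref{pic45}, the threshold $p=2/3$, and $\varphi=\psi:=\dep(x,z)$. The point of this choice is that $x$ is constantly $0$ on $(X,1)$, so on every submultiteam the atom $\dep(x,z)$ reduces to the requirement that $z$ be constant; since the three assignments $s_1,s_2,s_3$ carry the pairwise distinct values $z=1,0,2$, every submultiteam of $(X,1)$ satisfying $\dep(x,z)$ must be a singleton. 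This single fact drives both halves of the argument.

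For the left-hand side I would check $\A\models_{(X,1)}\F{2/3}\big(\dep(x,z)\lor\dep(x,z)\big)$ by using $(Y,1)=\{s_1,s_3\}$ from Fig.~\ref{pic45} as the existential witness. It is a submultiteam of size $2=\tfrac23\cdot|(X,1)|$, and splitting it into the singletons $\{s_1\}$ and $\{s_3\}$ — each trivially constant in $z$ — witnesses $\A\models_{(Y,1)}\dep(x,z)\lor\dep(x,z)$ in lax semantics, since $\{s_1\}\uplus\{s_3\}=(Y,1)$.

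For the right-hand side I would show that no lax split succeeds. Assume for contradiction that $(Y_1,n_1),(Y_2,n_2)\subseteq(X,1)$ witness $\A\models_{(X,1)}\F{2/3}\dep(x,z)\lor\F{2/3}\dep(x,z)$, so that $(X,1)\subseteq(Y_1,n_1)\uplus(Y_2,n_2)$ and $\A\models_{(Y_i,n_i)}\F{2/3}\dep(x,z)$ for $i=1,2$. As every submultiteam of $(X,1)$ satisfying $\dep(x,z)$ is a singleton, any witness $W\subseteq(Y_i,n_i)$ for $\F{2/3}\dep(x,z)$ has $|W|\le 1$; combined with the required $|W|\ge\tfrac23|(Y_i,n_i)|$ this forces $|(Y_i,n_i)|\le 1$. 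But the covering condition yields $|(Y_1,n_1)|+|(Y_2,n_2)|\ge|(X,1)|=3$, a contradiction; thus the right-hand side fails while the left-hand side holds.

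I expect the delicate point to be precisely this right-hand side analysis, because the ratio $p$ is taken relative to the full team $|(X,1)|$ on the left but relative to each piece $|(Y_i,n_i)|$ on the right, and because lax disjunction demands only that the two pieces cover $(X,1)$ (so they may overlap and have total size exceeding $3$). The crux is therefore to pick $\varphi,\psi$ whose approximate truth cannot be localized to two covering pieces: forcing every constancy witness to be a singleton over a team with pairwise distinct $z$-values caps each piece at size $1$, which simultaneously rules out every admissible split rather than just a single chosen one.
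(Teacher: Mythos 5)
Your proof is correct and follows essentially the same route as the paper: the paper also refutes the equivalence on the multiteam $(X,1)$ of Fig.~\ref{pic45} with $p=\frac{2}{3}$, using the formulas $x=y$ and $x=z$ (whose satisfying submultiteams are likewise forced to have size at most $1$), and derives the same contradiction between the per-piece size bound and the covering requirement $|(Y_1,n_1)|+|(Y_2,n_2)|\ge|(X,1)|=3$. Your only deviation is the choice $\varphi=\psi=\dep(x,z)$ in place of the two equality atoms, which changes nothing essential (and in fact your write-up spells out the size/covering argument that the paper leaves implicit).
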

\begin{proof}
Let $\A$ be the multistructure over the empty vocabulary with domain $(\{0,1,2\},1)$, where $1$ is the constant $1$ multiplicity function. Then $\A\models_{(X,1)}\F{\frac{2}{3}}(x=y\lor x=z)$ but $\A\not\models_{(X,1)}\F{\frac{2}{3}}(x=y)\lor\F{\frac{2}{3}}(x=z)$, where $(X,1)$ is the multiteam depicted in the Figure \ref{pic45}.\qed
	
\end{proof}

The next simple observation states the distributivity of $\G p$ with respect to conjunction $\land$, as well as the merger of two $\F p$-operators and two $\G q$-operators, respectively.

\begin{observation} The following equivalences hold:
\begin{enumerate}
	\item $\G p(\varphi\land\psi)\equiv\G p\varphi\land\G p\psi$,
	\item $\F p(\F q\varphi)=\F{p\cdot q}\varphi$,
	\item $\G p(\G q\varphi)=\G{p\cdot q}\varphi$.
\end{enumerate}
\end{observation}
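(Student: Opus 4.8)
The plan is to verify each of the three equivalences directly from the semantic definition of the approximate operators, so I would organise the proof as three short paragraphs, one per item. For all three I would fix an arbitrary multistructure $\A$ and multiteam $(X,m)$, write $N := |(X,m)|$, and unfold the satisfaction clauses for $\F p$ and $\G p$. The routine observation I would use repeatedly is that a submultiteam $(Y,n)\subseteq(X,m)$ counts as ``large'' precisely when $|(Y,n)| \ge p\cdot N$, and that this threshold condition is exactly what the quantifiers in the definition range over.

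For item (1), the distributivity of $\G p$ over $\land$, I would argue both directions. From left to right, suppose $\A\models_{(X,m)}\G p(\varphi\land\psi)$; then every large submultiteam satisfies $\varphi\land\psi$, hence satisfies both $\varphi$ and $\psi$ by the conjunction clause, so each conjunct of the right-hand side holds. From right to left, if every large submultiteam satisfies $\varphi$ and (separately) every large submultiteam satisfies $\psi$, then any single large $(Y,n)$ satisfies both, hence satisfies $\varphi\land\psi$. The key point is that the universal quantifier commutes with conjunction because the same family of large submultiteams is quantified over in all three occurrences; no interaction with the threshold is needed beyond noting it is identical throughout.

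For items (2) and (3) — the merger laws $\F p(\F q\varphi)=\F{p\cdot q}\varphi$ and $\G p(\G q\varphi)=\G{p\cdot q}\varphi$ — I would use the transitivity of the submultiteam relation together with multiplicativity of the size threshold. For (2), $\A\models_{(X,m)}\F p(\F q\varphi)$ means there is $(Y,n)\subseteq(X,m)$ with $|(Y,n)|\ge p\cdot N$ and $\A\models_{(Y,n)}\F q\varphi$, which in turn gives $(Z,k)\subseteq(Y,n)$ with $|(Z,k)|\ge q\cdot|(Y,n)|\ge q\cdot p\cdot N$; since $(Z,k)\subseteq(X,m)$ by transitivity, this witnesses $\F{p\cdot q}\varphi$. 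The reverse inclusion is the main obstacle and the only nonroutine step: given a witness $(Z,k)$ for $\F{p\cdot q}\varphi$ with $|(Z,k)|\ge p\cdot q\cdot N$, I must produce an intermediate $(Y,n)$ with $(Z,k)\subseteq(Y,n)\subseteq(X,m)$, $|(Y,n)|\ge p\cdot N$, and $|(Z,k)|\ge q\cdot|(Y,n)|$. The natural choice is to enlarge $(Z,k)$ up to exactly the needed size by adding elements of $(X,m)$; one takes $(Y,n)$ to be any submultiteam with $(Z,k)\subseteq(Y,n)\subseteq(X,m)$ and $|(Y,n)|=\max\{\lceil p\cdot N\rceil,\lceil |(Z,k)|/q\rceil\}$, which is attainable because this target lies between $|(Z,k)|$ and $N$ (here one checks $\lceil |(Z,k)|/q\rceil\le N$ using $|(Z,k)|\le N$ and $q\le 1$, and $\lceil p N\rceil\le N$ using $p\le 1$). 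With this choice both size constraints hold, so $(Y,n)$ witnesses $\F p(\F q\varphi)$.

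Item (3) is dual: $\A\models_{(X,m)}\G p(\G q\varphi)$ says every large-for-$p$ submultiteam $(Y,n)$ has every large-for-$q$ subsubmultiteam $(Z,k)$ satisfying $\varphi$, and I would show this is equivalent to every $(Z,k)\subseteq(X,m)$ with $|(Z,k)|\ge p\cdot q\cdot N$ satisfying $\varphi$. The forward direction again uses transitivity and the same interpolation construction to place any sufficiently large $(Z,k)$ inside a suitable $(Y,n)$; the backward direction is immediate, since if $(Z,k)\subseteq(Y,n)\subseteq(X,m)$ with $|(Y,n)|\ge p\cdot N$ and $|(Z,k)|\ge q\cdot|(Y,n)|$, then $|(Z,k)|\ge p\cdot q\cdot N$ and the hypothesis applies directly. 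The only delicate point, as in (2), is confirming the interpolation step is always possible, i.e.\ that the intermediate size target is realisable as the size of some multiteam sandwiched between $(Z,k)$ and $(X,m)$; I expect this to be the sole genuine obstacle, and everything else to be bookkeeping.
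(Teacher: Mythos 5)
Your item (1) is correct, and so are the two easy directions of (2) and (3): $\F p\F q\varphi$ entails $\F{p\cdot q}\varphi$, and $\G{p\cdot q}\varphi$ entails $\G p\G q\varphi$, exactly by the transitivity-plus-multiplicativity argument you give. The gap is in the step you yourself single out as the only delicate one, and it cannot be patched. The requirement $|(Z,k)|\ge q\cdot|(Y,n)|$ is an \emph{upper} bound on $|(Y,n)|$, namely $|(Y,n)|\le |(Z,k)|/q$, so your choice $|(Y,n)|=\max\{\lceil p\cdot N\rceil,\lceil |(Z,k)|/q\rceil\}$ violates it whenever the ceilings round up strictly: $q\cdot\lceil |(Z,k)|/q\rceil>|(Z,k)|$ unless $|(Z,k)|/q$ is an integer, and $q\cdot\lceil p\cdot N\rceil\le|(Z,k)|$ does not follow from $|(Z,k)|\ge p\cdot q\cdot N$. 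Worse, there are situations in which \emph{no} admissible intermediate size exists: with $N=10$, $p=51/100$, $q=9/10$, $|(Z,k)|=5$ we have $|(Z,k)|\ge p\cdot q\cdot N=4.59$, yet every integer $y\ge p\cdot N=5.1$ satisfies $q\cdot y\ge 5.4>5$.

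This integrality obstruction refutes the stated equivalences themselves, not merely your construction. Take the empty vocabulary, domain $\{0,1,\dots,10\}$ with all multiplicities $1$, and let $(X,1)$ consist of the ten assignments $s_i\colon(x,y)\mapsto(i,\,i+1\bmod 5)$ for $i=0,\dots,4$ and $t_j\colon(x,y)\mapsto(5+j,\,10)$ for $j=0,\dots,4$. Any submultiteam containing some $t_j$ violates $x\subseteq y$ (the value $5+j$ never occurs in the $y$-column), and any nonempty proper subset of $C=\{s_0,\dots,s_4\}$ violates it (it is not closed under cyclic predecessor), so the only nonempty satisfying submultiteam is $C$, of size $5$. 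With $p=51/100$ and $q=9/10$ we get $\A\models_{(X,1)}\F{p\cdot q}(x\subseteq y)$, since $5\ge 4.59$, but $\A\not\models_{(X,1)}\F p\F q(x\subseteq y)$: an outer witness must have size at least $6$, so an inner witness must have size at least $5.4$, hence at least $6$, and none exists. A symmetric example (five assignments with $(x,y)=(0,1)$ and five with $(x,y)=(1,0)$, kept distinct by a third variable) refutes the nontrivial direction of (3): there every submultiteam of size at least $6$ satisfies $x\subseteq y$ while the five $(0,1)$-assignments do not, so $\G p\G q(x\subseteq y)$ holds although $\G{p\cdot q}(x\subseteq y)$ fails. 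Since the paper states this Observation without proof, there is no argument of theirs to compare yours against; what your attempt actually uncovers is that (2) and (3) are valid only up to this rounding effect --- for instance they do hold whenever $q\cdot\lceil p\cdot N\rceil\le\lceil p\cdot q\cdot N\rceil$, in particular when the thresholds are integers --- whereas item (1) and the two one-sided entailments hold unconditionally.
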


The next two examples show that both downward closure and union closure are violated by the approximate operator.


\begin{example}\label{ex:pic4}
Let $\A$ be the multistructure over the empty vocabulary with domain $(\{0,1,2\},1)$, where $1$ is the constant $1$ multiplicity function. Then $\A\models_{(X,1)}\F{\frac{1}{3}}(x=y)$ but $\A\not\models_{(Y,1)}\F{\frac{1}{3}}(x=y)$, where $(Y,1)\subseteq (X,1)$ are the multiteams depicted in the Figure \ref{pic45}.
	
\end{example}

\begin{example}\label{ex:unionclosurefails}
Let $\A$ be the multistructure over the empty vocabulary with domain $(\{0,1\},1)$, where $1$ is the constant $1$ multiplicity function. The multiteams $(Z,m)$, $(Z,n)$, $(Z,k)$, $(Z,\ell)$ are depicted in the Figure \ref{pic45}.
Now $\A\models_{(Z,k)}\G{\frac{2}{3}} (x\leq y)$ and $\A\models_{(Z,\ell)}\G{\frac{2}{3}} (x\leq y)$. However $\A\not\models_{(Z,n)} x \leq y$ and thus $\A\not\models_{(Z,m)}\G{\frac{2}{3}}(x\leq y)$ even though $(Z,k)\uplus (Z,l) = (Z,m)$.
\end{example}

\begin{proposition}
Let $\mathcal L$ be a logic and $\varphi\in\mathcal L$ a formula. Then $\F p$ preserves union closure (whereas $\G p$ does not), i.e., $\F p \varphi$ is union closed whenever $\varphi$ is.
\end{proposition}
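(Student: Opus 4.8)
The plan is to prove the $\F p$ part directly by combining the witnessing submultiteams, and to dispose of the parenthetical $\G p$ claim by appealing to Example~\ref{ex:unionclosurefails}. So assume $\varphi$ is union closed, fix a multistructure $\A$, and suppose $\A\models_{(X_1,m_1)}\F p\varphi$ and $\A\models_{(X_2,m_2)}\F p\varphi$. Writing $(Z,h):=(X_1,m_1)\uplus(X_2,m_2)$, the goal is to show $\A\models_{(Z,h)}\F p\varphi$. Unwinding the semantics of $\F p$ first gives witnesses $(Y_1,n_1)\subseteq(X_1,m_1)$ and $(Y_2,n_2)\subseteq(X_2,m_2)$ with $|(Y_i,n_i)|\ge p\cdot|(X_i,m_i)|$ and $\A\models_{(Y_i,n_i)}\varphi$ for $i\in\{1,2\}$.

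The natural candidate witness for $(Z,h)$ is $(W,k):=(Y_1,n_1)\uplus(Y_2,n_2)$, and I would verify three things about it. First, $(W,k)\subseteq(Z,h)$: the multiplicity of any assignment $s$ in $(W,k)$ is $n_1(s)+n_2(s)$ and in $(Z,h)$ is $m_1(s)+m_2(s)$, so the pointwise submultiset inequalities $n_1(s)\le m_1(s)$ and $n_2(s)\le m_2(s)$ add up to yield $(W,k)\subseteq(Z,h)$. Second, the size bound: since $|(A,m)|=\sum_{a\in A}m(a)$ makes the size additive over $\uplus$, I get $|(W,k)|=|(Y_1,n_1)|+|(Y_2,n_2)|\ge p\cdot|(X_1,m_1)|+p\cdot|(X_2,m_2)|=p\cdot|(Z,h)|$. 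Third, $\A\models_{(W,k)}\varphi$, which is exactly where union closure of $\varphi$ is applied, to the two multiteams $(Y_1,n_1)$ and $(Y_2,n_2)$ whose disjoint union is $(W,k)$.

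With these three facts, $(W,k)$ witnesses $\A\models_{(Z,h)}\F p\varphi$ by the definition of the existential approximate operator, which completes the argument. I do not expect a genuine obstacle here; the proof is essentially bookkeeping. The only point requiring any care is that the two witness sizes are measured against the different totals $|(X_1,m_1)|$ and $|(X_2,m_2)|$, yet must recombine correctly against the new total $|(Z,h)|$ — this works precisely because all three of the submultiset relation, the size, and the ratio condition are compatible with taking disjoint unions. Finally, for the parenthetical assertion that $\G p$ does \emph{not} preserve union closure, I would simply cite Example~\ref{ex:unionclosurefails}, where $(Z,k)$ and $(Z,\ell)$ each satisfy $\G{\frac{2}{3}}(x\le y)$ while their disjoint union $(Z,m)$ does not.
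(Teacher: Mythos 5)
Your proof is correct and follows essentially the same route as the paper's: extract the two witnessing submultiteams, take their disjoint union, and verify the size bound via additivity of $\lvert\cdot\rvert$ over $\uplus$. In fact you are slightly more explicit than the paper, which leaves the submultiset check and the application of union closure of $\varphi$ implicit; your citation of Example~\ref{ex:unionclosurefails} for the $\G p$ part also matches the paper's intent.
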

\begin{proof}
	Let $\A$ be a multistructure and $X,Y$ be multiteams of $\A$.
	Assume that $\A\models_X \F p\varphi$ and $\A\models_Y \F p\varphi$. Then there are multiteams $X'\subseteq X$ and $Y'\subseteq Y$ such that $\lvert X'\rvert \ge p\lvert X\rvert$, $\lvert Y'\rvert\ge p\lvert Y\rvert$, and both $\A\models_{X'} \varphi$ and $\A\models_{Y'}\varphi$. Hence $\lvert X'\uplus Y'\rvert =\lvert X'\rvert +\lvert Y'\rvert \ge p\lvert X\rvert +p\lvert Y\rvert = p(\lvert X\rvert +\lvert Y\rvert ) = p\lvert X\uplus Y\rvert $ and thus $\A\models_{X\uplus Y} \F p \varphi$. \qed
\end{proof}

Yet locality holds for this logic as witnessed by the following proposition. The proof is by induction.
\begin{proposition}[Locality]\label{locality_approx}
 Let $\A$ be a multistructure, $(X,m)$ a multiteam, and $V$ be a set of variables such that $\Fr(\varphi) \sub V\sub \Dom(X)$. Then for all $\varphi\in\FO(\F p,\G p,\leq,\bbot_{\rm c},\dep(\cdot),\sub,\bot_{\rm c})$, it holds that
$\A\models_{(X,m)}\varphi$ \,iff\, $\A\models_{(X,m)\upharpoonright V}\varphi$.
\end{proposition}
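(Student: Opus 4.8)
I would prove Proposition~\ref{locality_approx} by structural induction on $\varphi$, extending the induction already carried out for Proposition~\ref{locality}. The atomic cases, the connectives $\land,\lor$, the quantifiers $\exists,\forall$, and the dependency atoms $\leq,\bbot_{\rm c},\dep(\cdot),\sub,\bot_{\rm c}$ are treated exactly as in the proof of Proposition~\ref{locality}, so the only genuinely new work is in the two cases $\F p\psi$ and $\G p\psi$. Here I use that $\Fr(\F p\psi)=\Fr(\G p\psi)=\Fr(\psi)\sub V$, so the induction hypothesis applies to $\psi$ relative to the same variable set $V$.

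Two elementary facts drive both cases. First, restriction to $V$ preserves size: for every submultiteam $(Y,n)\subseteq(X,m)$ one has $|(Y,n)\upharpoonright V|=|(Y,n)|$, since the defining sum only regroups the multiplicities of assignments agreeing on $V$; in particular $|(X,m)\upharpoonright V|=|(X,m)|$, so the threshold $p\cdot|(X,m)|$ is unchanged under restriction. Second, the \emph{lifting} fact: every submultiteam $(Y',n')\subseteq(X,m)\upharpoonright V$ can be lifted to a submultiteam $(Y,n)\subseteq(X,m)$ with $(Y,n)\upharpoonright V=(Y',n')$. Indeed, for each $V$-assignment $s$ one has $n'(s)\leq n(s)=\sum_{s'\upharpoonright V=s}m(s')$, so one may choose values $n(s')\leq m(s')$ for the assignments $s'\in X$ with $s'\upharpoonright V=s$ that sum to $n'(s)$; collecting these yields the desired lift, and by the first fact $|(Y,n)|=|(Y',n')|$.

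With these two facts the cases are symmetric. For $\F p\psi$, a witness $(Y,n)\subseteq(X,m)$ clearing the threshold with $\A\models_{(Y,n)}\psi$ restricts to $(Y,n)\upharpoonright V\subseteq(X,m)\upharpoonright V$, which has equal size (hence still clears the threshold) and satisfies $\psi$ by the induction hypothesis; conversely a witness $(Y',n')\subseteq(X,m)\upharpoonright V$ lifts to an equal-size $(Y,n)\subseteq(X,m)$ with $(Y,n)\upharpoonright V=(Y',n')$, and the induction hypothesis transfers $\A\models_{(Y',n')}\psi$ to $\A\models_{(Y,n)}\psi$. The case $\G p\psi$ is dual: to verify the universal condition on $(X,m)\upharpoonright V$ one lifts an arbitrary large submultiteam, applies the assumption on $(X,m)$, and pulls satisfaction back via the induction hypothesis; to verify it on $(X,m)$ one pushes an arbitrary large submultiteam down under $\upharpoonright V$ and reasons symmetrically.

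I expect the only subtle point to be the lifting fact; everything else is bookkeeping about sizes. The lifting itself is elementary—a greedy distribution of the aggregated multiplicity $n'(s)$ back across the assignments $s'$ projecting to $s$—but it is the one step where the passage from sets to multisets genuinely matters, since one must confirm that the aggregated upper bound $\sum_{s'\upharpoonright V=s}m(s')$ equals $n(s)$ so that any admissible $n'(s)$ is realizable. The remaining care is simply to observe that $\F p$ and $\G p$ neither bind nor introduce free variables, so that $\Fr(\psi)\sub V$ and the induction hypothesis is applicable to $\psi$ under the same $V$.
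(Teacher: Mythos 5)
Your proof is correct and takes essentially the same approach as the paper, which establishes Proposition~\ref{locality_approx} by structural induction (the paper states only that ``the proof is by induction''); your two facts---preservation of size under restriction to $V$ and the lifting of submultiteams of $(X,m)\upharpoonright V$ back to equal-size submultiteams of $(X,m)$---supply exactly the details needed for the new $\F p$ and $\G p$ cases, with all other cases inherited from Proposition~\ref{locality}. The only blemish is notational: in the lifting step you overload the symbol $n$ for both the aggregated multiplicity function of $(X,m)\upharpoonright V$ and the lifted multiplicity function you construct, but the construction itself is sound.
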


\section{On the Complexity of Approximate Dependence Logic}\label{Comp}
In the following we study computational complexity of model checking in dependence logic enriched with the operator $\F p$. 
The results hold under both team and multiteam semantics. To simplify notation, we work with team semantics in this section. Analogously to \cite{2015arXiv150301144D}, our results can be seen as a first step towards a systematic classification of the syntactic fragments of approximate dependence logic for which data complexity of model-checking is tractable/intractable.

We first define the model checking problem in the context of team semantics. We consider only Boolean queries, that is we define the model checking problem for a logic $\mathcal{L}$ as follows: given a model $\A$, a team $X$ of $\A$, and a formula $\varphi$ of $\mathcal{L}$, decide whether $\A\models_X\varphi$ holds. There are three parameters to this problem: the model $\A$, the team $X$, and the formula $\varphi$. Depending on which of these parameters are fixed, a different variant of the model checking problem arises. Here we consider two of these variants: the variant with a fixed formula (this is called \emph{data complexity}), and a variant in which nothing is fixed (this is called \emph{combined complexity}).

The following two theorems reveal that already very simple formulas of approximate dependence logic witness the $\NP$-completeness of the data complexity of the logic.
\begin{theorem}\label{thm:mc-data-complexity-D(Fp)-NP-complete-AND}
 Model checking for $\F p(\dep(x,y)\land\dep(u,v))$ is $\NP$-complete.
\end{theorem}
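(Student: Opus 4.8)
The plan is to first settle membership in $\NP$ and then prove $\NP$-hardness by a reduction from the \textsc{Independent Set} problem. Membership is immediate: to check $\A\models_X\F{p}\big(\dep(x,y)\wedge\dep(u,v)\big)$ one guesses a subteam $Y\subseteq X$, verifies $|Y|\ge p\cdot|X|$, and checks that $Y$ satisfies both dependence atoms, all in polynomial time. The observation driving the hardness is that a subteam $Y$ satisfies $\dep(x,y)\wedge\dep(u,v)$ iff the projections $s\mapsto(s(x),s(y))$ and $s\mapsto(s(u),s(v))$ are both (partial) functions on $Y$. Hence a largest such $Y$ is obtained by fixing two functions $f$ on the $x$-values and $g$ on the $u$-values and taking $Y=\{s\in X\mid f(s(x))=s(y)\text{ and }g(s(u))=s(v)\}$, so the question ``is there $Y$ with $|Y|\ge p|X|$?'' becomes ``can $f,g$ be chosen to make at least $p|X|$ assignments agree simultaneously in both coordinates?''. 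Note that with a single atom this agreement-maximisation is trivial (per $x$-value pick the most frequent $y$-value), so the hardness must come from the interaction of the two atoms.

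Given a graph $H=(V,E)$ and a target $k$, I would let both the $x$-values and the $u$-values range over $V$, so that $f,g\colon V\to\{0,1\}$ encode membership in the independent set. Multiplicities (realised in team semantics by attaching an auxiliary variable carrying the appropriate number of values, which the formula ignores by Proposition~\ref{locality}) play the role of weights $W_1\gg W_3>W_2\ge 1$. For each vertex $a\in V$ I place a consistency/reward cell on $(x,u)=(a,a)$ containing the assignments with $(y,v)=(0,0)$ of weight $W_1$ and $(y,v)=(1,1)$ of weight $W_1+W_2$; and for each edge $\{a,b\}$ (oriented arbitrarily) an edge cell on $(x,u)=(a,b)$ containing the three assignments $(y,v)\in\{(0,0),(0,1),(1,0)\}$, each of weight $W_3$. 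A vertex thus contributes $W_1$ or $W_1+W_2$ according to whether it is selected (and $0$ if $f,g$ disagree there), while an edge contributes $W_3$ unless both endpoints are selected, in which case it contributes $0$.

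Correctness rests on showing that an optimal choice of $f,g$ is always consistent ($f=g$) and selects an independent set. Taking $W_1$ larger than the total of all other weights forces $f=g$ at every vertex; writing $S=\{a:f(a)=1\}$, the total agreement then equals $W_1|V|+W_2|S|+W_3\big(|E|-|E(S)|\big)$, where $E(S)$ are the edges inside $S$. Since deleting a vertex incident to an internal edge changes this quantity by $-W_2+W_3\cdot(\text{internal edges at }a)\ge W_3-W_2>0$, any optimum is an independent set, and among independent sets the maximum is $W_1|V|+W_2\,\alpha(H)+W_3|E|$, where $\alpha(H)$ is the independence number. Hence the largest $|Y|$ exceeds the threshold $T=W_1|V|+W_2k+W_3|E|$ exactly when $\alpha(H)\ge k$.

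The remaining, and main, obstacle is that $p$ is fixed in the formula whereas both $T$ and $|X|$ depend on the instance, so I must arrange $T=p\,|X|$ for one fixed rational $p\in(0,1)$. I would handle this by padding. Adding ``always-satisfiable'' assignments (a fresh $x$-value paired with a fresh $u$-value; each adds $1$ to both $|X|$ and the maximal $|Y|$) pushes the ratio toward $1$, while adding ``saturated'' gadgets—a fresh $x$-value carrying $L$ assignments with pairwise distinct $y$-values over distinct fresh $u$-values, of which any functional $Y$ retains at most one—adds $L$ to $|X|$ but only $1$ to the maximal $|Y|$ and so pushes the ratio toward $0$. Combining the two and solving the resulting linear equation for the numbers of padding rows (after scaling all weights by the denominator of $p$ to secure integrality) yields, for every fixed rational $p\in(0,1)$, nonnegative integer choices realising $T=p\,|X|$; all sizes stay polynomial in $|H|$ because the weights and $L$ are polynomially bounded. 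This produces a polynomial-time many-one reduction from \textsc{Independent Set}, establishing $\NP$-hardness and hence $\NP$-completeness.
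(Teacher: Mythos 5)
Your proof is correct, but it takes a genuinely different route from the paper's. The paper reduces from $\threeSAT$: the team directly encodes a 3CNF formula as rows $(\text{clause},\text{literal},\text{variable},\text{parity})$, and the two atoms $\dep(\text{clause},\text{literal})\land\dep(\text{variable},\text{parity})$ express ``pick one literal per clause'' and ``assign truth values consistently''; the ratio $p=\frac{1}{3}$ falls out for free, since $\dep(\text{clause},\text{literal})$ already caps any subteam at one row per clause, so the threshold $\frac{1}{3}|X|$ forces exactly one literal per clause and no padding or weights are needed. You instead reduce from \textsc{Independent Set} via weighted vertex/edge gadgets (weights realised as duplicated rows over an auxiliary variable), an exchange argument showing that optimal $(f,g)$ are consistent and select an independent set, and a Diophantine padding scheme calibrating the threshold to the fixed $p$. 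What your approach buys is generality: the paper's construction establishes hardness only for the specific value $p=\frac{1}{3}$, whereas your padding (which does go through, using that $p=a/b$ is in lowest terms so one can choose the gadget size $L$ and the numbers $A,B$ of padding rows to solve $(b-a)A+(b-aL)B=a|X_0|-bT$ in nonnegative integers of polynomial size) yields hardness for every fixed rational $p\in(0,1)$. What the paper's approach buys is simplicity: no weights, no exchange argument, no calibration, just a transparent encoding of the input formula. Two cosmetic points: the locality you invoke is not actually needed (satisfaction of a dependence atom depends only on the variables it mentions, by definition), and the relevant locality statement for formulas containing $\F p$ would in any case be Proposition~\ref{locality_approx}, not Proposition~\ref{locality}; also ``$W_1$ larger than the total of all other weights'' should be read as $W_1 > W_2|V|+3W_3|E|$, which is what your flipping argument uses.
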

\begin{proof}
For the lower bound we give a polynomial many-one reduction from $\threeSAT$ inspired by a similar proof of Jarmo~Kontinen \cite{2010Jarmo}. Start with a formula $\varphi=\bigwedge_{i=1}^m\bigvee_{j=1}^3 \ell_{i,j}$ where $\ell_{i,j}$ is the $j$th literal in the $i$th clause, i.e., either a variable $x$ (said of parity $0$) or its negation $\lnot x$ (of parity $1$). In the following we will construct a tuple $(X,\psi)$ from $\varphi$ such that $\varphi\in\threeSAT$ if and only if $\A\models_X\psi$. First we define the team $X$ to be the set
$$
X = \{(i,j,x,p)\mid \text{in $i$th clause the $j$th literal is the variable $x$ with parity $p$}\}.
$$

Technically the team can be seen as an encoding of the given formula. For instance the formula $\varphi=(x_1\lor\lnot x_2\lor x_3)\land(\lnot x_1\lor \lnot x_2\lor \lnot x_3)$ would yield the team
$
X = \{(1,1,x_1,0),(1,2,x_2,1),(1,3,x_3,0),(2,1,x_1,1),(2,2,x_2,1),(2,3,x_3,1)\}.
$

The formula $\psi$ is defined as
 $$
 \F{\frac{1}{3}}\bigl(\dep(\text{clause},\text{literal})\land\dep(\text{variable},\text{parity})\bigr).
 $$

Then intuitively speaking $\psi$ states that one has to decide for each clause a satisfying literal and do this consistently, i.e., the corresponding assignment has to be consistent. At first one selects exactly one third of the elements in $X$ such that for each clause a literal is chosen (i.e., \emph{clause} will determine the value of \emph{literal}). Then the \emph{parity} of each \emph{variable} is consistently chosen (i.e., \emph{variable} will determine the value of \emph{parity}). We will next formally prove that $\varphi\in\threeSAT$ if and only if $\A\models_X\psi$. 

    We first show that $\varphi\in\threeSAT\Rightarrow\A\models_X\psi$. Thus assume that $\varphi\in\threeSAT$. Let $\theta$ be an assignment such that $\theta\models\varphi$. For each $1\leq k\leq m$, let $i_k\in\{1,2,3\}$ be a number such that the literal $\ell_{k,i_k}$ in the $k$th clause of $\psi$ is satisfied by $\theta$, i.e., $\theta\models\ell_{k,i_k}$. Let $I:= \{i_1,\dots,i_k\}$. In the following we will show that $\A\models_X\psi$ holds. Define
\(    
X':= \{(k,j,v,p) \in X \mid  j = i_k\}.
\)
Clearly $\lvert X' \rvert = \frac{1}{3}\lvert X \rvert$.
%
Moreover it is easy to check that for any two $(j,i,v,p),(j',i',v',p')\in X'$
\begin{enumerate}[(a)]
	\item $j=j'$ implies $i=i'$ (the clause determines the literal) and
	\item $v=v'$ implies $p=p'$ (the variable determines the parity).
\end{enumerate}

Hence from (a) is it follows that $\A\models_{X'} \dep(\text{clause},\text{literal})$ and from (b) it follows that $\A\models_{X'} \dep(\text{variable},\text{parity})$. Since $\lvert X' \rvert = \frac{1}{3}\lvert X \rvert$, we obtain $\A\models_X\psi$.

Now turn to the direction $\A\models_X\psi\Rightarrow\varphi\in\threeSAT$ and assume that $\A\models_X\psi$. Thus there exists a team $X'\subseteq X$ such that $|X'|\ge\frac{1}{3}|X|$ and $\A\models_{X'} \dep(\text{clause},\text{literal})\land\dep(\text{variable},\text{parity})$.
Since $\A\models_{X'} \dep(\text{clause},\text{literal})$ we have that
\begin{equation}\label{eq1}
(j,i,v,p),(j',i',v',p')\in X' \text{ and } j=j' \text{ imply } i=i'.
\end{equation}
Analogously, since $\A\models_{X'} \dep(\text{variable},\text{parity})$ we have that
\begin{equation}\label{eq2}
(j,i,v,p),(j',i',v',p')\in X' \text{ and } v=v' \text{ imply } p=p'.
\end{equation}
From $\eqref{eq1}$ we can deduce that $\lvert X' \rvert \leq \frac{1}{3} \lvert X \rvert$. Since $|X'|\ge\frac{1}{3}|X|$, we obtain that $|X'|=\frac{1}{3}|X|$. This together with $\eqref{eq2}$ ensures that 
\begin{equation}\label{eq3}
\text{for each clause $j$ of $\psi$ there exits some $i,v,p$ such that $(j,i,v,p)\in X'$}.
\end{equation}
It is now easy to construct from $X'$ an assignment $\theta$ such that $\theta\models \varphi$.
%
Define
\[
\theta(v):=
\begin{cases}
1 & \text{ if $(j,i,v,0)\in X'$ for some $j,i\in \mathbb{N}$},\\
0 & \text{ if $(j,i,v,1)\in X'$ for some $j,i\in \mathbb{N}$}.
\end{cases}
\]

From $\eqref{eq2}$ it follows that $\theta$ is well-defined, whereas $\eqref{eq1}$ and \eqref{eq3} ensure that every clause of $\varphi$ is satisfied by $\theta$. Hence we have $\varphi\in\threeSAT$.

For the $\NP$ upper bound, first observe that we can simply guess a subset $X'$ of $X$ such that $|X'|\ge\frac{1}{3}|X|$. Then we just have to check whether $\A\models_{X'}\dep(\text{clause},\text{literal})\land\dep(\text{variable},\text{parity})$ holds. This can be clearly done in polynomial time.
\qed
\end{proof}

The next theorem shows that $\NP$-hard properties can be defined using very simple formulas even if the operator $\F{p}$ is restricted to appear only in front of dependence atoms. It is worth noting that the data complexity of formulas addressed in Theorem \ref{thm:mc-data-complexity-D(Fp)-NP-complete-OR-AND} without the operator $\F p$ is in NL by the results of \cite{2010Jarmo}.

\begin{theorem}\label{thm:mc-data-complexity-D(Fp)-NP-complete-OR-AND}
 Model checking for $\dep(x,y)\lor(\F p\dep(x,y)\land\dep(u,v))$ is $\NP$-complete.
\end{theorem}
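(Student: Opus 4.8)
The plan is to establish $\NP$-membership and $\NP$-hardness separately. For the upper bound, the argument is routine: given $\A$, $X$, and the fixed formula $\dep(x,y)\lor(\F p\dep(x,y)\land\dep(u,v))$, we nondeterministically guess a splitting of $X$ into $Y\cup Z$ witnessing the outer disjunction, then guess a subteam $Z'\subseteq Z$ with $|Z'|\ge p|Z|$ for the $\F p$ operator, and finally verify the three dependence atoms in polynomial time. Since dependence-atom checking and size comparisons are polynomial, and we make only polynomially many guesses, the whole check runs in $\NP$.

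For the hardness direction I would again reduce from $\threeSAT$, reusing the encoding team $X$ from the proof of Theorem~\ref{thm:mc-data-complexity-D(Fp)-NP-complete-AND} (tuples $(i,j,x,p)$ recording that the $j$th literal of clause $i$ is variable $x$ with parity $p$). The key conceptual difference is that here the $\F p$ operator sits \emph{inside} a disjunct rather than in front of the whole formula, so I must arrange the semantics of the non-classical disjunction to do the combinatorial work. The natural idea is to choose $p=\tfrac13$ and design the reduction so that any witnessing split $X=Y\cup Z$ forces the ``interesting'' part of $X$ into the right-hand disjunct, where $\F{1/3}(\dep(\text{clause},\text{literal})\land\dep(\text{variable},\text{parity}))$ replays exactly the argument of the previous theorem. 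The left disjunct $\dep(x,y)$ must be made satisfiable only by a controlled (ideally trivial or degenerate) part of the team, so that it cannot be abused to ``hide'' assignments that would otherwise block the parity-consistency requirement. Concretely I expect to augment the encoding with a sentinel value or extra column guaranteeing that $\dep(\text{clause},\text{literal})$ holds on $Y$ precisely when $Y$ is empty (or a fixed harmless piece), thereby pinning almost all of $X$ into $Z$ and reducing to the earlier analysis.

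The main obstacle will be controlling the outer disjunction: under lax team semantics the split need not be disjoint and both $Y$ and $Z$ may overlap $X$ arbitrarily, so I must verify that no clever choice of $Y$ can satisfy $\dep(x,y)$ on a nonempty, ``useful'' portion of the encoding and thereby let $Z$ satisfy the approximate conjunct without yielding a genuine satisfying assignment. Establishing this requires a careful case analysis showing that any $Y$ on which $\dep(x,y)$ holds contributes nothing that relaxes the $\tfrac13$-counting bound on $Z$, so that the $|Z'|\ge\tfrac13|Z|$ requirement still forces one literal per clause and the $\dep(\text{variable},\text{parity})$ atom still forces global consistency. Once this pinning is verified, correctness in both directions follows essentially as in the proof of Theorem~\ref{thm:mc-data-complexity-D(Fp)-NP-complete-AND}: a satisfying assignment yields a witnessing subteam, and a witnessing subteam yields a well-defined consistent assignment $\theta$ via the parity dependence atom.
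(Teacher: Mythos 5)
Your upper bound (guess the split, guess the subteam for $\F p$, verify the atoms in polynomial time) is fine and agrees with the paper. The hardness direction, however, rests on a misreading of the target formula. You plan to have the right-hand disjunct ``replay'' Theorem~\ref{thm:mc-data-complexity-D(Fp)-NP-complete-AND} via $\F{\frac{1}{3}}\bigl(\dep(\text{clause},\text{literal})\land\dep(\text{variable},\text{parity})\bigr)$, but in Theorem~\ref{thm:mc-data-complexity-D(Fp)-NP-complete-OR-AND} the operator $\F p$ is applied to a \emph{single} dependence atom inside the conjunction, not to the conjunction as a whole, and the two are not interchangeable: in $\F p\dep(x,y)\land\dep(u,v)$ the un-approximated atom must hold on the \emph{entire} part of the team assigned to that disjunct, while only the other atom may be witnessed on a $p$-fraction of it. This kills the reduction you sketch. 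With the $\threeSAT$ encoding, the left disjunct $\dep(\text{clause},\text{literal})$ can absorb at most one row per clause, so the right part $Z$ must contain at least two literals of every clause, and the un-approximated parity atom must hold on all of $Z$; that demands a conflict-free choice of at least two literals per clause, which is \emph{not} equivalent to satisfiability --- the satisfiable formula $(x\lor y\lor z)\land(\lnot x\lor\lnot y\lor\lnot z)$ admits no such choice, since any two picks from each clause share a variable with opposite parities. Your ``sentinel'' idea of neutralising the left disjunct makes matters worse: then $Z$ is essentially all of $X$, and parity consistency on all of $Z$ fails for every input in which some variable occurs both positively and negatively.

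The idea you are missing is that, precisely because $\F p$ governs only one atom, the approximation has to be pushed onto the \emph{number of satisfied clauses}, not onto the selection of literals. The paper composes with the Garey--Johnson gap reduction $f$ from $\threeSAT$ to $\MaxTwoSat$: $\varphi\in\threeSAT$ iff at least $\frac{7}{10}$ of the clauses of the 2CNF formula $f(\varphi)$ are simultaneously satisfiable. Encoding $f(\varphi)$ as a team with two rows per clause, the left disjunct absorbs the \emph{unchosen} literal of each clause (so it does real combinatorial work and must not be made degenerate), the full atom $\dep(\text{clause},\text{literal})$ on the right part enforces exactly one chosen literal per clause, and the approximated atom $\F{\frac{7}{10}}\dep(\text{variable},\text{parity})$ says precisely that the chosen literals of a $\frac{7}{10}$-fraction of the clauses are consistent, i.e., that $f(\varphi)$ is a yes-instance of $\MaxTwoSat$. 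Without such a MAX-SAT-style gap source (or a genuinely new gadget playing the same role), a direct reduction from $\threeSAT$ with $p=\frac{1}{3}$ of the kind you propose cannot be repaired.
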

\begin{proof}

The upper bound is due to the same argument as in the proof of Theorem~\ref{thm:mc-data-complexity-D(Fp)-NP-complete-AND}: use nondeterminism to tame the $\F p$ operator. The rest is just standard technique as for $\D$, see the book of V{\"a}{\"a}n{\"a}nen \cite{vaananen07}.

Now we turn to the lower bound. Here we will reduce from $\threeSAT$ through $\MaxTwoSat$, a well-known $\NP$-hard optimisation problem whose decision variant is $\NP$-complete. The problem asks given a 2CNF-formula $\varphi$ and a number $k\in\N$, if at least $k$ of the clauses of $\varphi$ can be simultaneously satisfied \cite{js76}. Garey et~al.\ describe a reduction $f$ from $\threeSAT$ to the decision variant of $\MaxTwoSat$ such that $\varphi\in\threeSAT$ iff at least $\frac{7}{10}$ of the clauses of $f(\varphi)$ can be satisfied. 

We will exploit this known reduction in the following way. The team $X$ is constructed in the same way as in the proof of Theorem~\ref{thm:mc-data-complexity-D(Fp)-NP-complete-AND}. The formula then is
  $$
 \dep(\text{clause},\text{literal})\lor(\dep(\text{clause},\text{literal})\land\F{\frac{7}{10}}\dep(\text{variable},\text{parity})).
 $$
 
Let us briefly sketch the proof as it is quite similar to the one of Theorem~\ref{thm:mc-data-complexity-D(Fp)-NP-complete-AND}. The first $\lor$ just ``removes'' the not needed half of the literals in the clauses. Then $\dep(\text{clause},\text{literal})$ takes care of that in each clause exactly one literal is chosen whereas $\F{\frac{7}{10}}\dep(\text{variable},\text{parity})$ allows us to get down to the fraction of clauses which have to be satisfied, hence have to obey the dependence atom stating that the remaining variables have to be consistently chosen, i.e., variable determines parity.
\qed
\end{proof}

Currently the $\F p$ operator is defined with respect to some value of $p\in[0,1]$. We saw that it depicts the behaviour of a \emph{ratio}. Yet we want to shortly discuss a different approach for this setting. Instead we define $\F p$ for values of $p\in\N$ hence $p$ is now a natural number with the following meaning. A team $X$ satisfies a formula $\F p\varphi$ if there exists a team $Y\subseteq X$ of size $\ge p$ such that $Y\models\varphi$---similarly for $\G p$ the meaning would be that every team $Y\subseteq X$ of size $\ge p$ satisfies $\varphi$.

Sticking to this approach would allow one to state a similar result as for Theorem~\ref{thm:mc-data-complexity-D(Fp)-NP-complete-AND} and Theorem~\ref{thm:mc-data-complexity-D(Fp)-NP-complete-OR-AND} but now for \emph{combined complexity} as follows. Here one would just explicitly state the number of rows to be removed from the team, i.e., setting $p$ to $m$ in the constructed formula in the proof of Theorem~\ref{thm:mc-data-complexity-D(Fp)-NP-complete-AND}. Regarding Theorem~\ref{thm:mc-data-complexity-D(Fp)-NP-complete-OR-AND} in this setting the formula $f(\varphi)$ increases the number of clauses by factor $10$ and therefore requires to set $p$ to $\frac{7}{10}\cdot 10\cdot m=7\cdot m$ where $m$ is the number of clauses of the given 3CNF formula $\varphi$.

%
%
%
%
%
%
%
%

\section{Conclusion}
To the best of the authors knowledge this article is the first serious approach in defining team semantics with respect to multisets for first-order dependence logic. We also initiate the study of probabilistic analogues of independence and inclusion logic. Additionally the paper provides a first step into the study of a general 
approximation operator in the team semantics framework. We show several foundational properties of these newly defined formalisms and present some first computational complexity results for approximate dependence logic ($\ADL$). For $\ADL$ we show that the introduction of approximate operators enables us to encode $\NP$-hard properties into the model checking problem (data complexity) of this logic even with only two dependence atoms, a single approximate operator, and a single conjunction. This shows how strong and elegant this kind of approximate notion really is. It is an interesting open question to study the computational properties of the analogously defined approximate inclusion logic.

Heretofore a broad field around intuitionistic logic \cite{logicAndStructure} has developed. Intuitionistic logic can be seen as classical propositional logic without the law of excluded middle. One of the main concepts here is the \emph{intuitionistic implication} $\to$. In the setting of team semantics it is defined as follows. Let $\A$ be a structure and $X$ be a team. Then $\A\models_X \varphi\to\psi$ is true if and only if for all subsets $X'\subseteq X$ it holds that $\A\models_{X'}\varphi$ implies $\A\models_{X'}\psi$.
The intuitionistic implication has been studied in the context of dependence logic, see e.g., the work of Yang \cite{yangthesis}.
An approximate variant of this operator in our setting will yield a nice resemblance to the $\G p$ operator. The slight and quite natural adjustment of intuitionistic implication to our setting is then: $\A\models_X \varphi\to_p\psi$ if and only if for all subsets $X'\subseteq X$ with $|X'|\ge p\cdot|X|$ (and $p\in[0,1]\cap\mathbb Q$) it holds that $\A\models_{X'}\varphi$ implies that $\A\models_{X'}\psi$. The operator $\G p$ can now be expressed with the help of the intuitionistic \emph{approximate} implication. One can easily verify that $\G p\varphi$ is equivalent to $\top\to_p\varphi$. 

In this article we have considered approximation in the context of  multiteam semantics when restricted to the finite. However our definitions can be generalised in a straightforward manner to deal with arbitrary cardinalities.


\section*{Acknowledgements}
The second and the third author were supported by  grants  292767, 275241 and 264917 of the Academy of Finland. The fourth author is supported by the DFG grant ME 4279/1-1. The last author was supported by the Foundations' Post Doc Pool via Jenny and Antti Wihuri Foundation.
We also thank the anonymous referees for their helpful suggestions.
 
\bibliographystyle{splncs03}
\bibliography{multisets}

\end{document}